\documentclass[12pt]{article}

\usepackage{amssymb, amsmath,amsfonts, bm, eurosym,geometry,graphicx,caption,mathtools,natbib, xcolor,setspace,comment,pdflscape,array, parskip, bm, csquotes, multirow, booktabs, bbm, nicefrac, authblk}

\usepackage[normalem]{ulem}

\usepackage[]{footmisc}
\usepackage[colorlinks=true]{hyperref}
\usepackage{footnotebackref}
\usepackage{dirtytalk}
\usepackage[]{threeparttable}

\usepackage[capposition=bottom]{floatrow}

\usepackage[toc,page]{appendix}

\usepackage{color, amsthm}
\usepackage{tikz, subfig, cleveref, istgame, xparse, expl3}
\usetikzlibrary{positioning, patterns, decorations.pathreplacing, calligraphy, arrows.meta, bending, decorations.markings, external}
\tikzset{
    arc arrow/.style args={%
    to pos #1 with length #2 and options #3}{
    decoration={
        markings,
         mark=at position 0 with {\pgfextra{%
         \pgfmathsetmacro{\tmpArrowTime}{#2/(\pgfdecoratedpathlength)}
         \xdef\tmpArrowTime{\tmpArrowTime}}},
        mark=at position {#1-\tmpArrowTime} with {\coordinate(@1);},
        mark=at position {#1-2*\tmpArrowTime/3} with {\coordinate(@2);},
        mark=at position {#1-\tmpArrowTime/3} with {\coordinate(@3);},
        mark=at position {#1} with {\coordinate(@4);
        \draw[-{Stealth[length=#2,bend,#3]}]
        (@1) .. controls (@2) and (@3) .. (@4);},
        },
     postaction=decorate,
     }
}
\tikzstyle{line}=[draw]
\usepackage{pgfplots, tablefootnote}
\usepgfplotslibrary{fillbetween}
\pgfplotsset{compat=1.3}

\graphicspath{ {./images/} }

\newtheorem{assumption}{Assumption}
\newtheorem{proposition}{Proposition}
\newtheorem{lemma}[proposition]{Lemma}
\newtheorem{corollary}[proposition]{Corollary}

\newtheorem{definition}{Definition}

\newcommand{\hatalp}{\hat \alpha}

\newcolumntype{L}[1]{>{\raggedright\let\newline\\arraybackslash\hspace{0pt}}m{#1}}
\newcolumntype{C}[1]{>{\centering\let\newline\\arraybackslash\hspace{0pt}}m{#1}}
\newcolumntype{R}[1]{>{\raggedleft\let\newline\\arraybackslash\hspace{0pt}}m{#1}}

  {\list{}{\leftmargin=0.3in\rightmargin=0in}\item[]}%
  {\endlist}
  
\hypersetup{
    pdftoolbar=true,        
    pdfmenubar=true,        
    pdffitwindow=false,     
    pdfstartview={FitH},    
    pdftitle={title},    
    pdfauthor={Salvatore Mazzarino},     
    pdfsubject={Subject},   
    pdfcreator={Salvatore Mazzarino},   
    pdfproducer={Salvatore Mazzarino}, 
    pdfkeywords={Green Networking} {Mobile Cloud} {Network Coding} {Energy}, 
    pdfnewwindow=true,      
    colorlinks=true,       
    linkcolor=blue,          
    citecolor=blue,        
    filecolor=blue,      
    urlcolor=blue           
}

\makeatletter
\renewcommand\hyper@natlinkbreak[2]{#1}
\makeatother

\begin{document}

\title{\vspace{-0.8cm} 
Misspecified beliefs and the evolution of peer pressure\thanks{We are grateful to Pierpaolo Battigalli, Alberto Dalmazzo, Sebastiano Della Lena, and Philip Neary for helpful discussions and suggestions. We also thank seminar participants at the Psychology Seminar Series at Royal Holloway, University of London, for useful comments. Financial support from Leverhulme International Professorship (grant number LIP-2022-001) is gratefully acknowledged.}}

\author[s,b]{Paolo Pin}
\author[r]{Roberto Rozzi}
\affil[s]{Dipartimento di Economia Politica e Statistica, Universit\`a di Siena, Italy}
\affil[b]{BIDSA,  Universit\`a  Bocconi, Milan, Italy}
\affil[r]{Royal Holloway, University of London}

\date{May 2026}

\maketitle

\begin{abstract}
\noindent We study the emergence of conformity preferences in an environment in which agents choose effort under heterogeneous, possibly misspecified returns, and social interactions do not directly affect material payoffs. Some agents choose effort by trading off performance and conformity to expected peer behavior. We characterize subjective best responses. For any given beliefs, an optimal and unique level of peer pressure exists and is evolutionarily stable within groups of agents sharing the same misspecification. Such a level is zero for correctly specified agents and may be positive for misspecified ones. When the efficient level of peer pressure is interior, misspecified agents choose effort equal to their true return, resulting in an equilibrium behavior that is both self-confirming and Nash, allowing the persistence of misspecifications. Peer pressure need not generate long-run allocative distortions, but it creates a perceived value of social information. In equilibrium, this value depends only on misspecification, generating scope for informational rents. 


\noindent \\
\vspace{-0.75cm}\\
\noindent \textsc{Keywords: endogenous preferences; homophily; peer pressure} 
\vspace{0in}\\
\noindent\textsc{JEL Codes: D83; D85; D91}\\

\end{abstract}
\setcounter{page}{0}
\thispagestyle{empty}

\pagebreak 

\section{Introduction}\label{sec:intro}

Humans often feel the urge to conform with their peers in a wide range of contexts \citep{mascagni2018taxreview,farrow2017social}. Susceptibility to peer influence appears to be particularly pronounced during adolescence \citep{steinberg2007age,laursen2022does}. Despite the large empirical literature documenting conformity in behavior, much less is understood about the mechanisms through which conformity motives arise in the first place. We provide a simple explanation for this pattern by studying a model in which agents' material utility does not directly depend on the actions of others, yet it may still be optimal for them to behave as if social pressure were payoff--relevant.

We consider a population of agents partitioned into groups that differ in their returns to effort. Within each group, some agents know their own return to effort, while others do not. Agents' material utility does not depend on what their peers do; however, some agents believe that it does. As a result, they choose their effort based on a perceived utility function that incorporates the expected effort of their peers. Specifically, for these agents, perceived utility is a weighted average of true material utility and a conformity component capturing the distance between their own effort and the expected effort of their peers.
In this respect, misspecified agents in our model resemble users of a false but predictively successful theory: they may behave optimally even though their internal representation of the environment is incorrect.\footnote{A useful historical analogy is Ptolemaic astronomy. Although geocentric and ultimately false, Ptolemy's system achieved substantial predictive accuracy in practice. At the same time, it is often contrasted with later frameworks as relying on a more elaborate corrective structure rather than on a correct account of the underlying system; see, e.g., \citet{wray2018false}.}

We first characterize agents' subjective best response under this perceived utility. We then derive the efficient level of social pressure, defined as the level that maximizes material utility given the subjective best response, and assess the robustness of such a level to invasions by agents who share the same misspecification but differ in the intensity of social pressure they attach to peer behavior.

We find that the subjective best response always exists and is unique, and that the efficient level of peer pressure is also uniquely determined. In particular, for correctly specified agents the efficient level is always zero, whereas for misspecified agents it may be interior, that is, strictly between $0$ and $1$. We further show that, whenever the efficient level of social pressure is interior, the corresponding subjective best response is both a self-confirming equilibrium and a Nash equilibrium. It is self-confirming because, along the induced behavior, agents do not receive feedback that contradicts their beliefs; it is a Nash equilibrium because, at the efficient level of social pressure, the subjective best response also maximizes material utility. The implication is that agents may behave optimally without learning their true returns: misspecified beliefs can persist even when behavior is fully efficient. This also suggests a broader interpretation of the model: conformity concerns can survive not because they are directly payoff-relevant, but because they help agents act efficiently in environments with imperfect self-knowledge.


Furthermore, the efficient level of peer pressure depends on three key features: the structure of peer composition, the fraction of misspecified agents, and the degree of misspecification within each group. In particular, greater homophily in the network---that is, a higher weight placed on peers with similar returns to effort---increases the efficient level of peer pressure when peers behave optimally. Intuitively, when peers are more informative about an agent's own environment, for agents with incorrect beliefs, conforming to their behavior becomes a more effective way to compensate for misspecification. This effect may be attenuated or even reversed when peers themselves behave suboptimally, due to equilibrium feedback effects.

The impact of the fraction of misspecified agents on peer pressure operates through the informativeness of peer behavior. When peers behave optimally, this fraction does not matter, since all peers exert the optimal level of effort. When they do not, its impact is conditional on whether changes in composition make peers more or less informative about the group’s true return to effort. The relationship between peer pressure and misspecification is instead conditional on the group's relative position. The efficient level of peer pressure increases with misspecification if and only if the group's return to effort exceeds the average effort in its peer network. In this case, placing more weight on peers helps correct individual misperceptions; otherwise, stronger misspecification reduces the value of conformity.

Given the central role of peers' behavior in agents' decisions, we study the value that agents attribute to observing such behavior. Conditioning behavior on peer outcomes requires attention and computation, and may generate adjustment costs when beliefs are misspecified, even if behavior ultimately coincides with the efficient benchmark. More importantly, within the model agents perceive a direct benefit from observing the social statistic summarizing peers' actions, because this allows them to condition their behavior on the prevailing environment. In equilibrium, however, the perceived value of monitoring depends only on the gap between agents' misspecified beliefs about fundamentals and the true underlying returns. This implies that environments with larger misperceptions generate a higher perceived value of social information, creating scope for informational intermediaries---such as platforms providing access to social signals---to extract informational rents.

The remainder of the paper is organized as follows. After a brief review of the related literature, Section~\ref{sec:mod} introduces the model. Section~\ref{sec:res} presents the main results, while Section~\ref{sec:policy} studies the informational value and costs of observing peers' behavior. Section~\ref{sec:disc} discusses the broader relevance of the mechanism and its connection to two strands of the literature. Section~\ref{sec:concl} concludes. Proofs are collected in Appendix~\ref{secA_proofs}, Appendix~\ref{secB:res_2G} provides additional explicit results for the two-group case, and Appendix~\ref{secC:general_payoffs} extends the analysis to general strictly concave payoffs.

\paragraph{Literature review.}

Our study can be viewed as an indirect evolutionary account of peer pressure. More specifically, we use indirect evolutionary game theory to study the emergence of conformity preferences. Among the contributions analyzing departures from purely material preferences, \citet{bernheim1994theory} is a seminal reference emphasizing the role of conformity. In Bernheim's model, conformity preferences are taken as primitive, but are motivated by an evolutionary argument: individuals who are more highly regarded may enjoy greater reproductive opportunities.

While the mechanism proposed by \citet{bernheim1994theory} is plausible, it is also consistent with agents correctly anticipating that social approval may affect material payoffs, rather than with preferences that depart from material incentives. Our framework differs from \citet{bernheim1994theory} since we explicitly introduce misspecified beliefs. Within this environment, we show that the evolutionarily efficient intensity of social pressure depends on both the degree of misspecification and the structure of the peer network, in particular the extent of assortative interaction across types.

More generally, the literature on the evolution of preferences has its roots in earlier contributions such as \citet{becker1976altruism}, \citet{hirshleifer1977economics}, \citet{rubin1979evolutionary}, and \citet{frank1987if}.\footnote{For broad reviews of this literature, see \citet{alger2019evolutionary} and \citet{alger2023evolutionarily}.} The formalization of the indirect evolutionary approach is usually traced back to \citet{guth1992explaining} and \citet{guth1995evolutionary}. Early work in this literature focused on how the observability of preferences shapes evolutionary stability, with important contributions by \citet{ely2001nash}, \citet{ok2001evolution}, \citet{sethi2001preference}, \citet{heifetz2007maximize}, and \citet{dekel2007evolution}. More recent work has instead emphasized the role of the matching process in shaping evolutionary outcomes \citep[e.g.,][]{alger2013homo,alger2016evolution,alger2020evolution,wu2020labelling,wu2021preference}.

Our model contributes to this literature by introducing an additional dimension to the interaction structure: interaction is assortative in fundamentals (i.e., return to effort) rather than in preferences. In this setting, we show that the evolutionarily efficient intensity of conformity increases with the degree of assortative interaction across types. More broadly, our results clarify the environments in which conformity concerns are more likely to emerge and the individuals for whom such concerns are likely to be stronger. While our analysis is conducted in a specific applied setting, it highlights a novel channel through which assortativity in fundamentals—distinct from assortativity in preferences—shapes the evolution of conformity concerns. This suggests that the interaction between assortative environments and misspecified beliefs may play a broader role in the indirect evolutionary approach.

A related strand of the literature studies behavior under persistent departures from full rationality. The self-confirming equilibrium literature emphasizes epistemic consistency under limited feedback: agents may hold misspecified beliefs about the environment or about others' behavior, and such beliefs can persist whenever they are not contradicted along the realized path of play \citep{fudenberg1993self,battigalli2015self,battigalli2023learning}. In this framework, equilibrium behavior is disciplined by feedback rather than by the correctness of beliefs. To the best of our knowledge, \citet{gamba2013learning} is the only paper combining these two approaches, although in a different setting and for different preferences.

Despite addressing closely related phenomena (persistent misspecification, the robustness of non-Nash behavior, and the survival of seemingly biased decision rules), these two literatures have largely developed in isolation. Evolutionary models typically do not impose informational consistency on within-generation behavior, while self-confirming equilibrium models are generally silent about why particular misspecifications might survive evolutionary pressure. Our paper bridges these approaches by studying an environment in which agents choose actions based on self-confirming beliefs under limited feedback, while the weights they attach to different components of perceived utility---most notably social pressure---are shaped by evolutionary forces. In this sense, self-confirming equilibrium governs short-run behavior, while evolutionary selection determines which misspecified preferences are viable in the long run. This framework allows us to characterize which self-confirming behaviors are evolutionarily sustainable and the mechanisms through which distorted preferences such as conformity can emerge and persist, even when the underlying game has a unique Nash equilibrium.

More broadly, our paper relates to the literature on peer effects and social interactions in networks, which studies how behavior is shaped by peers, network structure, and assortative interaction \citep{currarini2009economic,boucher2024toward,zenou2025peer}. Relative to this literature, our focus is not on identifying reduced-form peer effects, but on providing a microfoundation for the emergence of conformity concerns when agents are imperfectly informed about their own payoff-relevant fundamentals.

Lastly, several studies show that imitation (which can be interpreted as the observable counterpart of conformity preferences) can be an effective behavioral rule. See, for instance, \citet{goerg2009experimental,schipper2009imitators,duersch2012unbeatable,licalzi2019categorization,alos2021multiple} for evidence in human settings, and \citet{laland2004social,rendell2010copy,dridi2015model} for similar findings in animal behavior. A related literature in economics studies imitation and herd behavior as the outcome of social learning and informational cascades \citep{banerjee1992simple,bikhchandani1992theory}. Our paper takes a different approach by providing a microfoundation for conformity
concerns. Rather than modeling imitation as a behavioral rule or as the outcome of a learning dynamic, we model social pressure as an endogenous preference. In our framework, imitation emerges as an equilibrium implication of agents optimally responding to perceived social incentives.

\section{Model}\label{sec:mod}

We consider a continuum of agents indexed by $i$, each choosing an effort level $x \in \mathbb{R}_+$. The environment is static. Behavior is determined by short-run optimization under subjective beliefs, while preferences are shaped by long-run evolutionary forces.

As we will see, agents' true material payoffs depend only on their own effort and on a group-specific return to effort. However, agents hold misspecified beliefs about that return and attach a subjective weight to conformity with their peers.

\paragraph{Material payoffs and subjective utility.}

Agents are partitioned into $K$ groups indexed by $k$,\footnote{%
We treat this partition as exogenous and assume that each group occupies a fixed fraction of the population. The precise group sizes play no role in the analysis below, except if one wishes to compute aggregate welfare.} where group $k$ is characterized by a true return to effort $\alpha_k \in \mathbb{R}_+$. The material payoff of an agent in group $k$ who chooses effort $x$ is
\begin{equation}\label{eq_pay_gen}
\pi_k(x,\alpha_k) = C - (\alpha_k - x)^2 .
\end{equation}
Absent any frictions, the optimal effort choice is therefore $x^*=\alpha_k$. However, agents may be misspecified about their own return to effort. Correctly specified agents know $\alpha_k$, whereas misspecified agents estimate it $\hat{\alpha}_k \neq \alpha_k$. Let $q_k \in (0,1)$ denote the fraction of agents in group $k$ who are correctly specified. The remaining fraction $1-q_k$ shares the common misspecified belief $\hat{\alpha}_k$.

Although conformity does not enter material payoffs directly, agents may behave as if aligning with peers were payoff-relevant. They may attach a subjective weight to conformity with their peers. Thus, each agent is characterized by a pair $\psi=(\alpha_{\psi,k},\lambda_\psi)$, where $\alpha_{\psi,k}=\alpha_k$ for correctly specified agents and  $\alpha_{\psi,k}=\hatalp_k$ for misspecified ones. $\lambda_\psi \in (0,1]$ measures the weight placed on conformity. Given $(\alpha_{\psi,k},\lambda_\psi)$, the subjective utility of type $\psi$ from choosing effort $x$ is
\begin{equation}\label{eq_ut/peer_k_miss}
u_\psi(x,E_k[x]) 
= (1-\lambda_\psi)\left(C - (\alpha_{\psi,k} - x)^2 \right)
+ \lambda_\psi \left( C - (x - E_k[x])^2 \right),
\end{equation}
where $E_k[x]$ denotes the expected effort of the peers faced by an agent from group $k$, and $C$ is a constant. The quadratic structure in Equation~\eqref{eq_pay_gen} is mainly useful for obtaining explicit formulas. In Appendix~\ref{secC:general_payoffs}, we consider a general strictly concave function $\pi(x,\alpha_k)=U(\alpha_k - x)$ where $U:\mathbb{R} \to \mathbb{R}$ is strictly increasing in $(-\infty,0)$ and strictly decreasing in $(0,\infty)$. We show that our main results are robust to such a specification.

\paragraph{Expectations, peer composition, and peer effects.}

Expectations $E_k[x]$ are determined by a peer composition $P=(p_{kj})_{kj}$. Let $p_{kj}$ 
denote the intensity with which an agent from group $k$ interacts with an agent 
from group $j$, with $\sum_j p_{kj}=1$. We do not impose any further restrictions on the peer composition. The matrix $P$ can be interpreted as a network of interactions, which may be directed or undirected; in particular, we do not require symmetry, so that $p_{jk}$ need not equal $p_{kj}$. These intensities capture homophily 
and biased meeting opportunities across groups. Specifically, $p_{kk}$ measures the intensity with which an agent in group $k$ interacts with another agent from group $k$, and hence the degree of assortativity in peer composition for group $k$. We assume a linear peer composition. We can conclude that the exogenous primitives of the model are the profiles $\{\alpha_k\}$, $\{p_{kj}\}$, and $\{q_k\}$, together with the constant $C$. We assume that agents know the full interaction structure, namely the profiles
\((p_{kj})_{k,j}\) and \((q_k)_k\). They may instead be misspecified about the
profiles \((\alpha_k)_k\), \((\lambda_k)_k\), and the constant \(C\).

In the evolutionary literature, such peer composition processes are typically implicit and random, and agents are assumed to understand the matching probabilities. For this reason, stable outcomes are often described as Bayesian Nash equilibria, even though agents may fail to know the true parameters of the environment.

Instead, our perspective is closer to the literature on network games with a continuum of agents \citep{galeotti2010network}. Rather than modeling stochastic matching explicitly, we interpret $E_k[x]$ as a reduced-form peer effect: the expected behavior of relevant others, shaped by homophily and interaction patterns. In this formulation, agents respond directly to the aggregate statistic $E_k[x]$, so that payoffs are deterministic conditional on expectations. 
This allows us to avoid a fully Bayesian description of the interaction. Instead, we distinguish between Nash equilibria---when agents correctly know the parameters of the environment and this is common knowledge---and self--confirming equilibria, where agents optimize given subjective beliefs that are consistent with the feedback they observe.

Let $\psi_k$ be a correctly specified type of group $k$ and $\phi_k$ a misspecified type of the same group. According to the above properties and given the primitives $\{p_{kj}\}$ and $\{q_k\}$, $E_k[x]$ is defined as follows:
\begin{equation}\label{eq_Ek[x]}
    E_k[x] = \sum_{j=1}^{K}p_{kj}(q_j\,x^{sbr}_{\psi_j} + (1-q_j) \, x^{sbr}_{\phi_j})
\end{equation}

\paragraph{Solution concepts.}

Agents may hold incorrect beliefs about the return to effort \(\alpha_k\) and may attach a subjective weight \(\lambda\) to social pressure, even though social interactions do not directly enter true material payoffs. We therefore model behavior using the concept of self--confirming equilibrium \citep{battigalli2015self}. In this framework, agents optimize given their subjective beliefs, and such beliefs need only be consistent with the payoff feedback observed along the equilibrium path.

We begin by defining the notion of subjective best response.\footnote{%
A subjective best response can be interpreted as rationalizable behavior, in the sense that it is optimal under some subjective representation of the decision problem, here given by \eqref{eq_ut/peer_k_miss}. In particular, agents correctly observe the aggregate statistic \(E_k[x]\), but may hold misspecified views about how payoffs depend on own effort and on conformity.}

\begin{definition}[Subjective best response -- SBR]\label{def:SBR}
Fix a peer composition $P=(p_{kj})_{k,j}$ and consider the true game in which material payoffs are given by $\pi_k(x,\alpha_k)$. 

Given expectations $E_k[x]$, a choice $x^{sbr}_\psi$ is a subjective best response for a type $\psi=(\alpha_{\psi,k},\lambda_\psi)$ if
\[
x^{sbr}_\psi \in \arg\max_{x\in\mathbb{R}_+} u_\psi(x,E_k[x]).
\]
\end{definition}

Suppose agents observe their realized material payoff $\pi_k(x,\alpha_k)$ together with the aggregate statistic $E_k[x]$ summarizing peers’ behavior, but do not observe the true return $\alpha_k$ directly. These observables constitute the feedback on which beliefs are evaluated. A self--confirming equilibrium is then defined as follows. 

\begin{definition}[Self-confirming equilibrium -- SCE]\label{def:SCE}
Fix a peer composition $P=(p_{kj})_{k,j}$ and consider the true game in which material payoffs are given by $\pi_k(x,\alpha_k)$.

A self-confirming equilibrium consists of a profile of effort choices $(x^{sce}_\psi)_{\psi\in\Psi}$ such that:

\begin{enumerate}
\item for every type $\psi=(\alpha_{\psi,k},\lambda_\psi)$, $x^{sce}_\psi$ is a subjective best response given expectations $E_k[x]$;
\item the beliefs $(\alpha_{\psi,k},\lambda_\psi)$ underlying $u_\psi$ are consistent with the payoff feedback generated by the equilibrium choices.
\end{enumerate}
\end{definition}

For comparison, we introduce the benchmark in which agents correctly know the parameters of the environment and maximize the true payoff function.

\begin{definition}[Nash equilibrium -- NE]\label{def_NE}
Fix a peer composition $P=(p_{kj})_{k,j}$ and consider the true game in which material payoffs are given by $\pi_k(x,\alpha_k)$.

A Nash equilibrium is a profile of effort choices $(x_\psi^{ne})_{\psi\in\Psi}$ such that for every type $\psi$
\[
x_\psi^{ne} \in \arg\max_{x\in\mathbb{R}_+} \pi_k(x,\alpha_k).
\]
\end{definition}

We now define efficiency for the social pressure.

\begin{definition}[Efficient Social Pressure Level]\label{def_lam_opt}
Fix a peer composition \(P=(p_{kj})_{k,j}\) and consider the true game in which material payoffs are given by \(\pi_k(x,\alpha_k)\).

A weight \(\lambda^*_\psi \in [0,1]\) is efficient if, given the induced subjective best response \(x^{sbr}_\psi(\lambda)\),
\[
\lambda^*_\psi \in \arg\max_{\lambda\in[0,1]} \pi_k(x^{sbr}_\psi(\lambda),\alpha_k).
\]
\end{definition}

Throughout the paper, we examine whether the efficient solution for $\lambda$ is also evolutionarily stable. To do so, we analyze the stability of $\lambda^*_\psi$ against the invasion of a type from the same group $k$ and with the same belief $\alpha_{\psi,k}$ but a different $\lambda$. Formally, we consider a population such that $\lambda_\psi = \lambda^*_\psi$ for all $\psi \in \Psi$, invaded by a small fraction $\varepsilon \in (0,1)$ of a type $\psi' = (\alpha_{\psi,k}, \lambda'_\psi)$ where $\lambda'_\psi \neq \lambda^*_\psi$. This analysis is conducted for all $\psi \in \Psi$. 
We now denote by \(x^{sbr}_{\psi}(\varepsilon)\) the subjective best response of type \(\psi\) in the presence of \(\varepsilon\) mutants of type \(\psi'\).

\begin{definition}[Evolutionarily Stable Social Pressure Level]\label{defi_lam_ess}

Fix a peer composition $P=(p_{kj})_{k,j}$ and consider the true game in which material payoffs are given by $\pi_k(x,\alpha_k)$. Consider two types $\psi = (\alpha_{\psi,k}, \lambda^*_\psi)$ and $\psi' = (\alpha_{\psi,k}, \lambda_{\psi'})$. Let $\varepsilon$ denote the frequency of type $\psi'$ in group $k$. Type $\psi$ is \emph{evolutionarily stable} against $\psi'$ if there exists $\underline{\varepsilon} > 0$ such that, for all $\varepsilon \in (0, \underline{\varepsilon}]$,
\[
\pi_k(x^{sbr}_\psi(\varepsilon), \alpha_k) \geq \pi_k(x^{sbr}_{\psi'}(\varepsilon), \alpha_k).
\]
A type $\psi$ is \emph{evolutionarily stable} if it is evolutionarily stable against all types $\psi' \neq \psi$ in $\Psi$.

\end{definition}

\section{Results}\label{sec:res}

In this section, we provide the main results of the model. We begin with a preliminary result that will be useful in what follows. 


\begin{lemma}\label{lem_l=0_nonmiss}
For any type $\psi$ satisfying $\alpha_{\psi,k}=\alpha_k$, $\lambda^*_\psi = 0$ is efficient and uniquely evolutionarily stable.
\end{lemma}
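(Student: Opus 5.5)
We begin by computing the subjective best response explicitly. For a type $\psi=(\alpha_{\psi,k},\lambda)$, the subjective utility \eqref{eq_ut/peer_k_miss} is a strictly concave quadratic in $x$. Its first-order condition gives
\[
x^{sbr}_\psi(\lambda)=(1-\lambda)\alpha_{\psi,k}+\lambda E_k[x].
\]
This value is nonnegative whenever $\alpha_{\psi,k}\ge 0$ and $E_k[x]\ge 0$. Those conditions hold because all returns lie in $\mathbb{R}_+$ and efforts are chosen in $\mathbb{R}_+$. Hence the constraint $x\in\mathbb{R}_+$ does not bind and the response is unique. The formula extends continuously to $\lambda=0$, giving $x^{sbr}_\psi(0)=\alpha_{\psi,k}$. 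Note that an individual agent, being of measure zero, takes $E_k[x]$ as given.

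\textbf{Efficiency.} Take $\alpha_{\psi,k}=\alpha_k$. Then
\[
\pi_k(x^{sbr}_\psi(\lambda),\alpha_k)=C-\lambda^2\bigl(\alpha_k-E_k[x]\bigr)^2 .
\]
This is at most $C$, and it equals $C$ at $\lambda=0$. So $\lambda^*_\psi=0$ attains the global maximum of material payoff, and it is efficient in the sense of Definition~\ref{def_lam_opt}.

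A subtlety is that $\lambda=0$ is the unique maximizer only when $E_k[x]\neq\alpha_k$. If $E_k[x]=\alpha_k$, every $\lambda$ gives payoff $C$. I would note this and state efficiency as membership in the argmax.

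\textbf{Evolutionary stability.} Fix the resident type $\psi=(\alpha_k,0)$ and introduce a fraction $\varepsilon$ of mutants $\psi'=(\alpha_k,\lambda')$ with $\lambda'\neq 0$. The key observation is that the resident's best response does not depend on peers at all. Whatever $E_k[x]$ becomes in the perturbed population, $x^{sbr}_\psi(\varepsilon)=\alpha_k$, so the resident earns $C$. The mutant earns $C-\lambda'^2(\alpha_k-E_k^\varepsilon[x])^2\le C$. The inequality in Definition~\ref{defi_lam_ess} therefore holds for every $\varepsilon\in(0,1)$, and $\psi$ is evolutionarily stable.

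Here the only work is to check that the perturbed expectation $E_k^\varepsilon[x]$ is well defined. It solves the linear system \eqref{eq_Ek[x]} augmented by the mutant share. That system is a fixed point of a map whose coefficients are $\lambda$-weighted rows of the stochastic matrix $P$. I would show it has a unique solution when the relevant weights are below one, for instance by a contraction argument in the sup norm. If some weight equals one, I would instead simply note that any consistent $E_k^\varepsilon[x]$ suffices, since the comparison above holds pointwise.

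\textbf{Uniqueness of the stable level.} This is the main obstacle. I need to show that no $\tilde\lambda>0$ is stable, which requires exhibiting an invader that strictly beats it. The natural invader is the type $(\alpha_k,0)$: it earns $C$, while the resident earns $C-\tilde\lambda^2(\alpha_k-E^\varepsilon_k[x])^2$. That is strictly less whenever $E^\varepsilon_k[x]\neq\alpha_k$.

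The difficulty is the degenerate case where peer behavior already averages to $\alpha_k$ for small $\varepsilon$. There every $\lambda$ is payoff-equivalent, and with the weak inequality of Definition~\ref{defi_lam_ess} uniqueness can fail. I would handle this by continuity in $\varepsilon$. Generically $E_k[x]\neq\alpha_k$ at $\varepsilon=0$, because other groups or misspecified members pull the average away from $\alpha_k$. By continuity the strict gap persists for small $\varepsilon$, which yields uniqueness. I would state the nondegeneracy condition $E_k[x]\neq\alpha_k$ explicitly as what "uniquely" requires.
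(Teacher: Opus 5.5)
Your proof is correct and follows the paper's argument. You substitute the first-order condition $x^{sbr}_\psi=(1-\lambda)\alpha_k+\lambda E_k[x]$ into the material payoff to get $C-\lambda^2(\alpha_k-E_k[x])^2$, and you observe that the $\lambda=0$ incumbent plays $\alpha_k$ and earns $C$ for every $\varepsilon$, so no mutant can do better. You also go beyond the paper in two places, which its proof leaves implicit: you show uniqueness by letting the $\lambda=0$ type invade any resident $\tilde\lambda>0$, with strictness for small $\varepsilon$ by continuity under $E_k[x]\neq\alpha_k$, and you check that the perturbed fixed point is well defined.
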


To prove the result it is sufficient to notice that since $\alpha_{\psi,k} = \alpha_k$, all $\lambda^*_\psi \neq 0$ would make type $\psi$ deviate from the payoff maximizing choice $x^* = \alpha_k$. By Lemma \ref{lem_l=0_nonmiss}, $\lambda^*_{\psi}=0$ and $x^*_\psi=\alpha_k$ for all types with $\alpha_{\psi,k}=\alpha_k$. Hence, we can simplify the notation. We denote $x^{sbr}_{\psi}=x^{sbr}_k$ and $\lambda_\psi=\lambda_k$ for all types s.t. $\alpha_{\psi,k}=\hatalp_k$.

Given Lemma~\ref{lem_l=0_nonmiss}, we can substitute $x^{sbr}_{\psi_k}$ in Equation~\eqref{eq_Ek[x]} and obtain the following expression for $E_k[x]$
\begin{equation}\label{eq_E[x]_k}
    E_k[x] = \sum_{j = 1}^K p_{kj}\left(q_j \alpha_j + (1-q_j)x^{sbr}_j \right)
\end{equation}

The subjective best response $x^{sbr}_k$ follows from the first-order condition of Equation~\eqref{eq_ut/peer_k_miss}:
\begin{equation}\label{eq_xopt_k}
    x^{sbr}_k = (1-\lambda_k) \hatalp_k + \lambda_k E_k[x]
\end{equation}

The two above equations delineate a system with $2K$ equations and $2K$ unknowns which can be reduced to a system of $K$ equations and unknowns by substituting Equation~\eqref{eq_E[x]_k} into Equation~\eqref{eq_xopt_k}. Such a system is well defined and all functions are linear. Thus, we can derive some general properties from it about the existence and uniqueness of $x^{sbr}_k$. The final system of equations we need to study takes the following form
\begin{equation}\label{eq_xopt_k_red}
    x^{sbr}_k = (1-\lambda_k) \hatalp_k + \lambda_k \sum_{j = 1}^K p_{kj}\left(q_j \alpha_j + (1-q_j)x^{sbr}_j \right)
\end{equation}

\paragraph{Existence Results.} We start our results by showing that given Equation~\eqref{eq_xopt_k_red}, a subjective best response exists and it is unique for all misspecified agents.

\begin{lemma}\label{lem_unique_x_k}
There exists a unique vector $(x^{sbr}_1,\dots,x^{sbr}_K)$ of subjective best responses.
\end{lemma}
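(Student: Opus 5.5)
We will write the system \eqref{eq_xopt_k_red} in matrix form and show that the associated linear operator is invertible. Let $x=(x^{sbr}_1,\dots,x^{sbr}_K)^\top$, $\Lambda=\mathrm{diag}(\lambda_1,\dots,\lambda_K)$, $Q=\mathrm{diag}(q_1,\dots,q_K)$, $\hat a=(\hatalp_1,\dots,\hatalp_K)^\top$ and $a=(\alpha_1,\dots,\alpha_K)^\top$. Then \eqref{eq_xopt_k_red} reads
\begin{equation*}
x=(I-\Lambda)\hat a+\Lambda P Q a+\Lambda P (I-Q)\,x ,
\end{equation*}
that is, $(I-M)x=b$ with $M=\Lambda P(I-Q)$ and $b=(I-\Lambda)\hat a+\Lambda PQa$. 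Existence and uniqueness then reduce to showing that $I-M$ is nonsingular, in which case $x=(I-M)^{-1}b$.

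The key step is a bound on the spectral radius of $M$. Each entry satisfies $M_{kj}=\lambda_k p_{kj}(1-q_j)\ge 0$, and the $k$-th row sum is $\lambda_k\sum_j p_{kj}(1-q_j)\le \sum_j p_{kj}(1-q_j)$. Since $q_j\in(0,1)$ for all $j$ and $\sum_j p_{kj}=1$, this row sum is at most $\max_j(1-q_j)<1$. Hence $\|M\|_\infty<1$, so $\rho(M)<1$. It follows that $I-M$ is invertible, and in fact $(I-M)^{-1}=\sum_{n\ge 0}M^n$ by the Neumann series. Equivalently, the map $x\mapsto b+Mx$ is a contraction in the sup norm, and Banach's fixed point theorem delivers the same conclusion. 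The point to stress is that $q_j>0$ is what makes this work: because correctly specified agents anchor at $\alpha_j$, peer feedback is strictly damped. It also means that $\lambda_k=1$ is permitted.

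Two remaining points connect the fixed point to Definition~\ref{def:SBR}.

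\emph{Concavity.} For $\lambda_k\in(0,1]$ the subjective utility \eqref{eq_ut/peer_k_miss} is strictly concave in $x$, with second derivative $-2$. The first-order condition \eqref{eq_xopt_k} therefore characterizes the unique unconstrained maximizer for any given $E_k[x]$, and any SBR profile must solve the linear system.

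\emph{Nonnegativity.} We must check that the constraint $x\in\mathbb{R}_+$ does not bind. Since $\hat a\ge 0$, $a\ge 0$, and $\Lambda$, $P$, $Q$ have nonnegative entries, we have $b\ge 0$. Because $(I-M)^{-1}=\sum_n M^n$ is a nonnegative matrix, $x\ge 0$ componentwise. It follows that each $E_k[x]\ge 0$ and each interior FOC solution lies in $\mathbb{R}_+$, so it coincides with the constrained argmax.

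I expect the main obstacle to be this last step: showing that the interior solution is feasible, so that no corner conditions are needed. The Neumann-series positivity argument is what I would use to handle it.
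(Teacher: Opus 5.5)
Your proof is correct and is essentially the paper's argument. Your matrix $I-\Lambda P(I-Q)$ is exactly the paper's coefficient matrix, and your bound $\|M\|_\infty<1$, i.e.\ $\lambda_k\sum_j p_{kj}(1-q_j)<1$, is the same inequality the paper uses to get strict diagonal dominance; you conclude via the Neumann series, the paper via nonsingularity of a strictly diagonally dominant matrix. Your concavity and nonnegativity checks spell out a step the paper leaves implicit: it notes that the $M$-matrix inverse is nonnegative but never uses this to show the constraint $x\in\mathbb{R}_+$ is slack. One small caveat is that your nonnegativity step relies on $\hat\alpha_k\ge 0$, which the paper never states but clearly intends.
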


Lemma~\ref{lem_unique_x_k}  ensures that, despite the feedback between expectations and individual choices, equilibrium behavior is well defined. In particular, misspecification and conformity concerns do not generate indeterminacy in the vector of subjective best responses.
We can use the result in Lemma~\ref{lem_unique_x_k} to prove the existence and uniqueness of $\lambda^*_\psi$ for all $\psi$. 

\begin{lemma}\label{lem_unique_laE_k[x]}
For any misspecified type, there exists a unique efficient social pressure level.
\end{lemma}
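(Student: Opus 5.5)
The plan is to reduce the choice of $\lambda$ for a misspecified type of group $k$ to a one-dimensional problem in which the induced effort $x^{sbr}_k(\lambda)$ is an explicit, monotone function of $\lambda$. Material payoff is $C-(\alpha_k-x^{sbr}_k(\lambda))^2$, so finding an efficient level is the same as minimizing $|x^{sbr}_k(\lambda)-\alpha_k|$ over $\lambda\in[0,1]$.

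First, I will fix the weights $\lambda_j$ of all other groups. I then treat the system \eqref{eq_xopt_k_red} as a linear system in which only row $k$ depends on the variable $\lambda=\lambda_k$. For the rows $j\neq k$, Lemma~\ref{lem_unique_x_k} (applied to the subsystem with $x_k$ treated as a parameter) expresses $x^{sbr}_{-k}$ as an affine function of $x_k$; this uses the same invertibility argument as that lemma. Substituting into \eqref{eq_E[x]_k} gives $E_k[x]=c+d\,x_k$, where the constants $c$ and $d$ do not depend on $\lambda$. Then \eqref{eq_xopt_k} becomes
\[
x_k(\lambda)=\frac{(1-\lambda)\hatalp_k+\lambda c}{1-\lambda d}.
\]
The step I expect to require the most care is showing $d<1$, so that the denominator is positive on $[0,1]$. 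For this I would argue that $d$ is the total weight of feedback paths from $x_k$ back to $E_k[x]$ through misspecified agents only. Every such path carries factors $(1-q_j)<1$ and $\lambda_j\le1$, and the rows of $P$ sum to one. Hence $d\le\sum_j p_{kj}(1-q_j)<1$, which follows from a sub-stochasticity (spectral radius less than one) argument on the matrix $\big(\lambda_j p_{jl}(1-q_l)\big)$, mirroring the proof of Lemma~\ref{lem_unique_x_k}.

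Next, I will differentiate this Möbius function. A direct computation gives
\[
x_k'(\lambda)=\frac{c-(1-d)\hatalp_k}{(1-\lambda d)^2}.
\]
So $x_k(\cdot)$ is continuous and strictly monotone on $[0,1]$ whenever $\hatalp_k\neq \bar x:=c/(1-d)$, moving from $x_k(0)=\hatalp_k$ to $x_k(1)=\bar x$.

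Uniqueness then follows by cases.
\begin{itemize}
\item If $\alpha_k$ lies between $\hatalp_k$ and $\bar x$, strict monotonicity yields a unique $\lambda^*\in[0,1]$ with $x_k(\lambda^*)=\alpha_k$. This gives the maximal payoff $C$, and it is interior because $\hatalp_k\neq\alpha_k$.
\item Otherwise, $|x_k(\lambda)-\alpha_k|$ is strictly monotone in $\lambda$, so the unique minimizer is the endpoint whose image is closer to $\alpha_k$: $\lambda^*=0$ or $\lambda^*=1$.
\end{itemize}
I also need to check the constraint $x\in\mathbb{R}_+$. Since $\hatalp_k,\alpha_j\ge0$ and all coefficients are nonnegative, $x_k(\lambda)\ge0$, so the constraint never binds.

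The main obstacle is the degenerate case $\hatalp_k=\bar x$. There $x_k(\lambda)\equiv\hatalp_k$ for every $\lambda$, so every $\lambda$ is efficient and uniqueness fails. I would first try to show that this case cannot arise at the relevant fixed point. Failing that, I would state uniqueness for generic primitives, or select $\lambda^*=0$ by convention, and flag this as a non-generic knife-edge. A secondary point is that, if other groups' $\lambda_j$ are themselves set at their efficient levels, the argument should be read as a best reply given $\lambda_{-k}$. This is consistent with the definition, which fixes everything but $\lambda_\psi$.
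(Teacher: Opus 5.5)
Your argument is correct, and it establishes monotonicity by a different and more explicit route than the paper.

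The paper stays with the full $K\times K$ system. It recalls from Lemma~\ref{lem_unique_x_k} that the coefficient matrix is a nonsingular M-matrix with nonnegative inverse, and asserts that the shift caused by $\lambda_k$ therefore ``propagates without sign reversals.'' From this it concludes that $x^{sbr}_k(\lambda_k)$ is monotone. It then uses the first-order condition to say that any interior optimum solves $x^{sbr}_k=\alpha_k$, with a corner otherwise.

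You instead eliminate the other $K-1$ equations and obtain the scalar map $x_k(\lambda)=\frac{(1-\lambda)\hat{\alpha}_k+\lambda c}{1-\lambda d}$, with $0\le d\le\sum_j p_{kj}(1-q_j)<1$. Its derivative has the constant sign of $c-(1-d)\hat{\alpha}_k$. This gives you three things the paper's proof leaves implicit:
\begin{itemize}
\item The sign constancy the M-matrix sketch needs. Differentiating the system gives $\partial x_k/\partial\lambda_k=(A^{-1})_{kk}(E_k[x]-\hat{\alpha}_k)$. Your formula shows $E_k[x]-\hat{\alpha}_k=\frac{c-(1-d)\hat{\alpha}_k}{1-\lambda_k d}$, which cannot change sign along $\lambda_k$.
\item Uniqueness of the corner solution, via strict monotonicity of $|x_k(\lambda)-\alpha_k|$. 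The paper only says the optimum is at ``$0$ or $1$.''
\item A precise description of the knife-edge.
\end{itemize}

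On the knife-edge, do not try to rule it out. The constants $c$ and $d$, and hence $\bar x$, do not depend on $\hat{\alpha}_k$. So whenever $\bar x\neq\alpha_k$ (the typical situation for $K\ge 2$), you can set $\hat{\alpha}_k:=\bar x$. The type is then misspecified, $E_k[x]\equiv\hat{\alpha}_k$, and the material payoff equals $C-(\alpha_k-\bar x)^2$ for every $\lambda_k\in[0,1]$, so every weight is efficient. The lemma therefore holds only under the nondegeneracy condition $E_k[x]\neq\hat{\alpha}_k$. This is the condition the paper imposes explicitly in Appendix~\ref{secC:general_payoffs}. Its proof of this lemma assumes it tacitly, when it treats (non-strict) monotonicity as enough for at most one solution.

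One small slip: if $\alpha_k=\bar x$, your $\lambda^*$ equals $1$. It then lies in $(0,1]$ rather than being interior.
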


Lemma~3 establishes that, for each misspecified type, there is a unique conformity weight that maximizes true material payoffs. The proof relies on showing that optimal effort is continuous and monotone in $\lambda_k$. The intuition can be seen from Equation~\eqref{eq_E[x]_k}: increasing
\(\lambda_k\) shifts the subjective best response away from \(\hatalp_k\)
and toward the peer statistic \(E_k[x]\). Since the equilibrium system is
well behaved, this shift is monotone. This implies that the unconstrained problem admits at most one interior solution.\footnote{Note that, while $x^{sbr}_k$ is linear in $\lambda_k$, the induced payoff $\pi_k(x^{sbr}_k(\lambda_k), \alpha_k)$ is quadratic in $\lambda_k$, since deviations from $\alpha_k$ enter quadratically. Hence, the efficient peer pressure level may take interior values in $(0,1)$.} Once attention is restricted to $\lambda_k \in [0,1]$, however, the optimum may also arise at a corner, so uniqueness is not completely immediate from linearity alone.

\paragraph{Properties of interior solutions.}

We now turn to the case in which the efficient social pressure level is interior, i.e., $\lambda_k^* \in (0,1)$ satisfies Definition~\ref{def_lam_opt}. In this case, one can derive a number of additional properties. We begin by showing that such a social pressure level also satisfies Definition~\ref{defi_lam_ess}.

\begin{proposition}\label{prop_l*k_ESS}
If there exists an efficient social pressure level $\lambda^*_k \in (0,1)$, then it is also evolutionarily stable.
\end{proposition}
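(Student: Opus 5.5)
The plan is to show first that an interior efficient level forces the misspecified type's subjective best response to coincide exactly with the true return. Then a continuity argument in the mutant share $\varepsilon$ delivers the strict payoff ranking required by Definition~\ref{defi_lam_ess}.

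\emph{Step 1: interior efficiency pins down behavior.} By Lemma~\ref{lem_unique_x_k}, $x^{sbr}_k(\lambda_k)$ is well defined. As used in the proof of Lemma~\ref{lem_unique_laE_k[x]}, it is differentiable and strictly monotone in $\lambda_k$ (holding the other groups' weights fixed). The induced material payoff is $\pi_k(x^{sbr}_k(\lambda_k),\alpha_k)=C-(\alpha_k-x^{sbr}_k(\lambda_k))^2$. Its derivative is $2(\alpha_k-x^{sbr}_k)\,\partial x^{sbr}_k/\partial\lambda_k$. Since $\lambda^*_k\in(0,1)$ is an interior maximizer, this derivative vanishes, and because $\partial x^{sbr}_k/\partial\lambda_k\neq 0$ we get $x^{sbr}_k(\lambda^*_k)=\alpha_k$. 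So the incumbent type attains the maximal material payoff $C$.

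\emph{Step 2: the mutant at $\varepsilon=0$ strictly underperforms.} Introduce a share $\varepsilon$ of mutants $\psi'=(\hatalp_k,\lambda')$ with $\lambda'\neq\lambda^*_k$. Their effort enters $E_k[x]$ and the $E_j[x]$ with weight $(1-q_k)\varepsilon$. The system \eqref{eq_xopt_k_red} becomes a $(K+1)$-dimensional linear system whose coefficients are affine in $\varepsilon$. Both types face the same statistic $E_k(\varepsilon)$, and by the first-order condition \eqref{eq_xopt_k}
\[
x^{sbr}_{\psi'}(\varepsilon)-x^{sbr}_{\psi}(\varepsilon)=(\lambda'-\lambda^*_k)\bigl(E_k(\varepsilon)-\hatalp_k\bigr).
\]
At $\varepsilon=0$ we have $E_k(0)\neq\hatalp_k$. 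Otherwise \eqref{eq_xopt_k} would give $x^{sbr}_k=\hatalp_k\neq\alpha_k$, contradicting Step 1. Hence $x^{sbr}_{\psi'}(0)\neq\alpha_k$, and the mutant's payoff is strictly below $C$.

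\emph{Step 3: continuity in $\varepsilon$.} I would show that the perturbed system has a unique solution that is continuous in $\varepsilon$ near $0$. I would reuse the argument of Lemma~\ref{lem_unique_x_k}: the iteration matrix has entries $\lambda p_{kj}(1-q_j)$, with row sums strictly below one because $q_j>0$ and $\lambda\le 1$. This still holds after splitting group $k$'s misspecified mass into incumbents and mutants. Hence $I-M(\varepsilon)$ is invertible, and its inverse is continuous in $\varepsilon$. Then $\pi_k(x^{sbr}_\psi(\varepsilon),\alpha_k)\to C$, while $\pi_k(x^{sbr}_{\psi'}(\varepsilon),\alpha_k)$ converges to a limit strictly below $C$. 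This yields an $\underline{\varepsilon}>0$ with a strict ranking on $(0,\underline{\varepsilon}]$. The threshold depends on $\lambda'$, which Definition~\ref{defi_lam_ess} allows since stability is required pairwise against each $\psi'$.

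I expect the main obstacle to be Step 1's use of $\partial x^{sbr}_k/\partial\lambda_k\neq0$. It requires checking, via the implicit-function form of the linear system, that the feedback through $E_k[x]$ never exactly offsets the direct effect $E_k[x]-\hatalp_k$. The natural route is to write $x^{sbr}_k$ as a ratio of affine functions of $\lambda_k$ (by Cramer's rule) with a nonvanishing denominator, and read off strict monotonicity from the proof of Lemma~\ref{lem_unique_laE_k[x]}.
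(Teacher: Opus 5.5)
Your proposal is correct and follows essentially the paper's proof. Incumbents play $\alpha_k$ at $\varepsilon=0$, a mutant with $\lambda'\neq\lambda^*_k$ does not, and continuity of the perturbed linear system in $\varepsilon$ gives the strict ranking for small $\varepsilon$. Your two supporting steps are, if anything, tighter than the paper's: the difference formula $x^{sbr}_{\psi'}-x^{sbr}_{\psi}=(\lambda'-\lambda^*_k)(E_k-\hatalp_k)$ replaces the paper's appeal to uniqueness of $\lambda^*_k$, and the diagonal-dominance check for the enlarged system replaces its bare assertion of continuity in population shares.

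The obstacle you flag resolves along the Cramer's-rule route you suggest. Only row $k$ depends on $\lambda_k$, so $x^{sbr}_k(\lambda_k)$ is a ratio of affine functions, and its derivative has constant sign on $[0,1]$. That derivative vanishes only in the knife-edge case $x^{sbr}_k\equiv\hatalp_k$ (i.e.\ $E_k=\hatalp_k$ at $\lambda_k=0$), which the paper also implicitly excludes. In that case incumbents and mutants tie, so the weak inequality of Definition~\ref{defi_lam_ess} holds anyway.
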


Proposition~\ref{prop_l*k_ESS} shows that efficiency is not only a static property of the optimal conformity weight: whenever an interior efficient social pressure level exists, it is also robust to evolutionary deviations by agents with the same misspecification but a different degree of conformity.

Having established  the evolutionary stability of $\lambda^*_k$, we can study the properties of such peer pressure parameter. Given the monotonicity of $x^{sbr}_k$ in $\lambda_k$, we show in the proof of Lemma~\ref{lem_unique_laE_k[x]} that $\lambda^*_k$ satisfies $x^{sbr}_k(\lambda^*_k) = \alpha_k$. Thus, when a $\lambda^*_k \in (0,1)$ exists, even misspecified agents will play according to their true value. From this fact, we derive a straightforward property.

\begin{corollary}\label{cor_selfconf_gen}

If $\lambda^*_k \in (0,1)$ for all misspecified agents, then the profile of choices $(x^{sbr}_k(\lambda^*_k))_k$ is both a self-confirming and a Nash equilibrium.

\end{corollary}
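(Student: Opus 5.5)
We need to show two things: that the profile $(x^{sbr}_k(\lambda^*_k))_k$, together with the correctly specified agents' choices $\alpha_k$, is a Nash equilibrium, and that it is a self-confirming equilibrium. The key input is the fact, established in the proof of Lemma~\ref{lem_unique_laE_k[x]}, that an interior efficient level satisfies $x^{sbr}_k(\lambda^*_k)=\alpha_k$.

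\textbf{Step 1 (effort levels).} I would first recall why $x^{sbr}_k(\lambda^*_k)=\alpha_k$ holds when $\lambda^*_k\in(0,1)$. The map $\lambda_k\mapsto x^{sbr}_k(\lambda_k)$ is continuous and monotone, since by Lemma~\ref{lem_unique_x_k} it is the unique solution of the linear system~\eqref{eq_xopt_k_red}. The induced payoff is $C-(\alpha_k-x^{sbr}_k(\lambda_k))^2$. Hence an interior maximizer must satisfy the first-order condition
\[
-2(x^{sbr}_k-\alpha_k)\,\partial x^{sbr}_k/\partial\lambda_k=0 .
\]
Strict monotonicity gives a nonzero derivative, so $x^{sbr}_k=\alpha_k$.

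A subtlety is that the derivative here is taken with the other groups' levels held fixed. Since every group is assumed to be at its interior $\lambda^*_j$, I would check that the joint profile is consistent. Substituting $x^{sbr}_j=\alpha_j$ for all $j$ into~\eqref{eq_E[x]_k} gives $E_k[x]=\sum_j p_{kj}\alpha_j$. Equation~\eqref{eq_xopt_k} then reads $\alpha_k=(1-\lambda^*_k)\hatalp_k+\lambda^*_k E_k[x]$, which is satisfied by the $\lambda^*_k$ computed group by group. By the uniqueness in Lemma~\ref{lem_unique_x_k}, this profile is the subjective best response vector.

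\textbf{Step 2 (Nash).} Every type now chooses $\alpha_k$: correctly specified agents do so by Lemma~\ref{lem_l=0_nonmiss}, and misspecified agents do so by Step 1. Since $\alpha_k$ is the unique maximizer of $\pi_k(\cdot,\alpha_k)$ on $\mathbb{R}_+$, Definition~\ref{def_NE} is satisfied.

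\textbf{Step 3 (self-confirming).} Condition 1 of Definition~\ref{def:SCE} holds by construction, because each choice is a subjective best response. For condition 2, the feedback consists of the realized payoff $\pi_k=C$ and the statistic $E_k[x]$. I would argue that the misspecified model $(\hatalp_k,\lambda^*_k)$ predicts exactly these observables. $E_k[x]$ is correctly observed by assumption. For the payoff, the agent's subjective model predicts
\[
(1-\lambda^*_k)\bigl(C-(\hatalp_k-\alpha_k)^2\bigr)+\lambda^*_k\bigl(C-(\alpha_k-E_k[x])^2\bigr),
\]
and this is generally not $C$. Since agents may be misspecified about the constant $C$, I would allow a subjective constant $\hat C$ chosen so that the predicted payoff equals the realized value $C$. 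This uses the paper's explicit allowance that agents may be wrong about $C$ and $\lambda$.

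\textbf{Main obstacle.} I expect the hardest step to be making condition 2 precise, namely specifying which payoff feedback the beliefs must match and showing that a suitable subjective $\hat C$ makes it match. If instead feedback is interpreted as the payoff-maximizing property, i.e.\ no local deviation reveals a higher realized payoff, the argument is immediate: realized effort equals $\alpha_k$, so observed material payoffs are at their maximum and contain no signal contradicting the belief. Under either interpretation, no observable contradicts the beliefs $(\hatalp_k,\lambda^*_k)$, which completes the proof.
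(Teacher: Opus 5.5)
Your proposal is correct and follows the paper's own (omitted) argument. Interiority plus the monotonicity from the proof of Lemma~\ref{lem_unique_laE_k[x]} forces $x^{sbr}_k(\lambda^*_k)=\alpha_k$ for every group, which gives the Nash property directly; the subjective-best-response property, together with the paper's explicit allowance that agents may be misspecified about $C$, handles the feedback requirement. The one loose step is ``strict monotonicity gives a nonzero derivative'': you do not need the derivative, since continuity and strict monotonicity already rule out an interior maximizer with $x^{sbr}_k\neq\alpha_k$. Note, however, that strict monotonicity itself requires $E_k[x]\neq\hatalp_k$, because $\partial x^{sbr}_k/\partial\lambda_k$ is a positive multiple of $E_k[x]-\hatalp_k$ and the map $\lambda_k\mapsto x^{sbr}_k$ is constant when $E_k[x]=\hatalp_k$; the paper's Lemma~\ref{lem_unique_laE_k[x]} assumes this nondegeneracy implicitly as well.
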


Corollary~\ref{cor_selfconf_gen} highlights the main implication of the model. When the efficient social pressure level is interior, social pressure exactly offsets misspecification: agents behave as if they knew the true return to effort, even though their underlying beliefs remain incorrect. In this sense, correct behavior and misspecified beliefs can coexist in equilibrium. Importantly, the profile of subjective best responses constitutes a Nash and a
self-confirming equilibrium only when all misspecified agents have an interior efficient
level of social pressure. However, even when this condition fails, some groups may still choose an interior (efficient) level of social pressure. These agents  exert the optimal level of effort, and the feedback they receive does not contradict their (misspecified) beliefs.

The result in Corollary~\ref{cor_selfconf_gen} is straightforward (so, the proof is omitted) but it can be directly used to compute the value of $\lambda^*_k$. Indeed, we know that if $\lambda^*_k \in (0,1)$, we can substitute $x^{sbr}_k$ with $\alpha_k$ in Equation~\eqref{eq_xopt_k_red}, obtaining the following equation:
$$
\alpha_k = (1-\lambda^*_k) \hatalp_k + \lambda^*_k\sum_{j = 1}^K p_{kj}\left(q_j \alpha_j + (1-q_j)x^{sbr}_j \right)
$$

Let $\Sigma_k = \sum_{j = 1}^K p_{kj}\left(q_j \alpha_j + (1-q_j)x^{sbr}_j \right)$. We can derive the following formula for the interior solution of $\lambda^*_k$
\begin{equation}\label{eq_l*_k}
    \lambda^*_k = \frac{\hatalp_k - \alpha_k}{\hatalp_k - \Sigma_k}
\end{equation}
Equation~\eqref{eq_l*_k} clarifies the economic content of the efficient social pressure level. The numerator captures the extent of misspecification relative to the true return to effort, while the denominator captures the discrepancy between the agent's perceived return and the perceived peer choices. From Equation~\eqref{eq_l*_k}, we can derive some properties of $\lambda^*_k$. We start by stating the conditions under which $\lambda^*_k$ is interior.

\begin{corollary}\label{cor_l*_k_int}
    $\lambda^*_k \in (0,1)$ if and only if one of the following two conditions holds:
    \begin{enumerate}
        \item $\hatalp_k <\alpha_k<\Sigma_k$
        \item $\Sigma_k<\alpha_k<\hatalp_k$
    \end{enumerate}
\end{corollary}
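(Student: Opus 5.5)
The plan is to work directly from Equation~\eqref{eq_l*_k}, treating $\Sigma_k$ as the peer statistic evaluated at the profile in which group $k$ plays $x^{sbr}_k=\alpha_k$. This is the relevant object for an interior solution, by the characterization $x^{sbr}_k(\lambda^*_k)=\alpha_k$ established in the proof of Lemma~\ref{lem_unique_laE_k[x]}.

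\textbf{Step 1: the fixed-point identity.} For any $\lambda\in(0,1)$, Equation~\eqref{eq_xopt_k_red} gives
\[
x^{sbr}_k-\hatalp_k=\lambda\,(\Sigma_k-\hatalp_k).
\]
So $\lambda^*_k\in(0,1)$ with $x^{sbr}_k(\lambda^*_k)=\alpha_k$ holds if and only if the relation $\alpha_k-\hatalp_k=\lambda^*_k(\Sigma_k-\hatalp_k)$ is solved by some $\lambda^*_k\in(0,1)$. Since $\hatalp_k\neq\alpha_k$, this forces $\Sigma_k\neq\hatalp_k$, and then $\lambda^*_k=(\hatalp_k-\alpha_k)/(\hatalp_k-\Sigma_k)$ is well defined.

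\textbf{Step 2: translate the bounds on the ratio.} I would then verify that $\lambda^*_k\in(0,1)$ is equivalent to two conditions:
\begin{enumerate}
\item[(a)] $\hatalp_k-\alpha_k$ and $\hatalp_k-\Sigma_k$ have the same strict sign;
\item[(b)] $|\hatalp_k-\alpha_k|<|\hatalp_k-\Sigma_k|$.
\end{enumerate}
Split on the sign of $\hatalp_k-\Sigma_k$.
\begin{itemize}
\item If $\Sigma_k>\hatalp_k$, then (a) and (b) read $0<\alpha_k-\hatalp_k<\Sigma_k-\hatalp_k$, which is $\hatalp_k<\alpha_k<\Sigma_k$.
\item If $\Sigma_k<\hatalp_k$, multiplying through by the positive number $\hatalp_k-\Sigma_k$ gives $0<\hatalp_k-\alpha_k<\hatalp_k-\Sigma_k$, which is $\Sigma_k<\alpha_k<\hatalp_k$.
\end{itemize}
Conversely, each chain of inequalities makes $\alpha_k$ lie strictly between $\hatalp_k$ and $\Sigma_k$. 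It therefore yields a ratio strictly in $(0,1)$, and plugging that $\lambda$ into Equation~\eqref{eq_xopt_k_red} reproduces $x^{sbr}_k=\alpha_k$. Since $x^{sbr}_k=\alpha_k$ attains the maximal material payoff, this $\lambda$ is efficient, and by the uniqueness in Lemma~\ref{lem_unique_laE_k[x]} it equals $\lambda^*_k$.

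\textbf{Main obstacle.} The delicate point is that $\Sigma_k$ depends on $x^{sbr}_k$ itself whenever $p_{kk}(1-q_k)>0$, so it is itself a function of $\lambda_k$. I plan to handle this by defining $\Sigma_k$ as the value at the solution of the system~\eqref{eq_xopt_k_red} with group $k$'s action pinned at $\alpha_k$; the other groups' responses are unique by Lemma~\ref{lem_unique_x_k} applied to the reduced system. Under this definition, the only fact needed is the "only if" direction: any interior efficient $\lambda^*_k$ must deliver $x^{sbr}_k=\alpha_k$. This follows from the monotonicity and continuity of $x^{sbr}_k$ in $\lambda_k$ used in Lemma~\ref{lem_unique_laE_k[x]}, since an interior maximizer of $-(\alpha_k-x^{sbr}_k(\lambda))^2$ with $x^{sbr}_k$ strictly monotone must hit $\alpha_k$ exactly. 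With that fact, the equivalence reduces to the sign algebra of Step 2.
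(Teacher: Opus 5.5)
Your proposal is correct and follows the paper's route. The paper omits the proof as a direct reading of Equation~\eqref{eq_l*_k}, which is exactly your Step~2 sign analysis, and your device of evaluating $\Sigma_k$ with group $k$'s action pinned at $\alpha_k$ (hence independent of $\lambda_k$, which enters only the $k$-th equation of the system), together with the uniqueness check for the converse, makes explicit what the paper leaves implicit; the one caveat, inherited from Lemma~\ref{lem_unique_laE_k[x]} rather than introduced by you, is that $x^{sbr}_k(\lambda_k)$ is in general only weakly monotone: in the non-generic case where the peer statistic computed with group $k$'s misspecified agents at $\hatalp_k$ equals $\hatalp_k$, one gets $x^{sbr}_k\equiv\hatalp_k$ for every $\lambda_k$, every $\lambda_k$ is efficient, and the appeal to strict monotonicity (like the paper's uniqueness claim) breaks down there.
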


The result in the above corollary is straightforwardly derived from Equation~\eqref{eq_l*_k}, and thus, the proof is omitted. While hard to comment on, the conditions in Corollary~\ref{cor_l*_k_int} are helpful in limiting the comparative statics to interior values of $\lambda^*_k$. It should be noted that when $\lambda^*_k$ is interior for all groups $k$, since $x^{sbr}_k=\alpha_k$ for all $k$, the conditions such that $\lambda^*_k \in (0,1)$ for all $k$ boil down to $\hatalp_k <\alpha_k<\sum_{j = 1}^K p_{kj}\alpha_j$ or $\sum_{j = 1}^K p_{kj}\alpha_j<\alpha_k<\hatalp_k$. 

\paragraph{Implications for learning under limited feedback.} Corollaries~\ref{cor_selfconf_gen} and~\ref{cor_l*_k_int} imply a further insight. Suppose that agents are allowed to update their beliefs $\alpha_{\psi,k}$ through some learning process. Even then, full learning need not occur. When the conditions of Corollary~\ref{cor_l*_k_int} hold, the subjective best response coincides with the
true payoff-maximizing effort level. Hence, the feedback generated by equilibrium behavior
does not contradict agents' initial beliefs \(\alpha_{\psi,k}\). As a result, misspecification may
persist among agents with interior efficient social pressure levels.

The intuition is simple. Under an interior efficient social pressure level, agents choose the effort level $\alpha_k$, even when their beliefs about returns are incorrect. Observed outcomes are therefore consistent with those beliefs along the equilibrium path, so there is no force inducing further learning. Correct behavior can thus coexist with incorrect beliefs.

This argument applies only to interior solutions. When instead $\lambda_k^* \in \{0,1\}$, social pressure is either too weak or too strong to induce effort equal to $\alpha_k$. In such corner cases, agents' behavior may generate feedback that contradicts their beliefs, thereby creating scope for learning the true value $\alpha_k$. See Appendix~\ref{secB:res_2G} for a detailed analysis in the case $K=2$.
In those corner cases, behavior generally no longer coincides with the Nash benchmark of the true game, so the equivalence between self-confirming and Nash equilibrium established for interior solutions breaks down.

\paragraph{Comparative Statics.} In the next statement, we analyze the relation between the efficient social pressure level and two important variables, i.e., the misspecification $\hatalp_k$ and the assortativity in peer composition $p_{kk}$.

\begin{proposition}\label{prop_cstat_l*_k}
If \(\lambda_k^* \in (0,1)\), then:
\begin{enumerate}
    \item \(\lambda_k^*\) is increasing in \(\hatalp_k\) if and only if
    \(\alpha_k>\Sigma_k\).

    \item Consider a proportional increase in within-group homophily, so that
    \[
    p_{kj}(p_{kk})=(1-p_{kk})r_{kj}, \qquad j\neq k,
    \]
    with \(r_{kj}\geq 0\) and \(\sum_{j\neq k}r_{kj}=1\). Then \(\lambda_k^*\)
    is increasing in \(p_{kk}\) provided that \(\lambda_j^*\in(0,1)\) for all
    groups \(j\). Otherwise,
    \[
    \frac{\partial \lambda_k^*}{\partial p_{kk}}
    =
    \frac{(\hatalp_k-\alpha_k)(\alpha_k+g'(p_{kk}))}
    {(\hatalp_k-p_{kk}\alpha_k-g(p_{kk}))^2},
    \]
    where
    \[
    g(p_{kk})
    :=
    \sum_{j\neq k}p_{kj}(p_{kk})
    \bigl(q_j\alpha_j+(1-q_j)x_j^{sbr}(p_{kk})\bigr).
    \]

    \item The effect of \(\{q_j\}\) is generally ambiguous. If there is at least  one group $\ell$ (including $\ell=k$) such that 
    \(x_\ell^{sbr}=\alpha_\ell\), then
    \[
    \frac{\partial \lambda_k^*}{\partial q_{\ell}}=0
    .
    \]
    Otherwise,
    \[
    \frac{\partial \lambda_k^*}{\partial q_j}
    =
    \frac{\hatalp_k-\alpha_k}{(\hatalp_k-\Sigma_k)^2}
    \frac{\partial \Sigma_k}{\partial q_j},
    \]
    where \(\partial \Sigma_k/\partial q_j\) includes both the direct effect on
    group \(j\)'s average behavior and the indirect equilibrium effects on the
    subjective best responses of other groups.
\end{enumerate}
\end{proposition}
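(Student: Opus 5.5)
The plan is to work directly from the closed form \eqref{eq_l*_k}, $\lambda^*_k=(\hatalp_k-\alpha_k)/(\hatalp_k-\Sigma_k)$, which holds whenever $\lambda^*_k\in(0,1)$. The key point is to fix what depends on which parameter. When $\lambda^*_k$ is interior, $x^{sbr}_k(\lambda^*_k)=\alpha_k$ (shown in the proof of Lemma~\ref{lem_unique_laE_k[x]}). Hence the misspecified agents of group $k$ enter every $\Sigma_j$ only through $\alpha_k$. It follows that $\Sigma_k$, and the $x^{sbr}_j$ of the other groups, do not depend on $\hatalp_k$: the other groups' system \eqref{eq_xopt_k_red} takes group $k$'s contribution to be $q_k\alpha_k+(1-q_k)\alpha_k=\alpha_k$. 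With this in hand, each item reduces to differentiating a ratio in which only one piece moves.

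\textbf{Item 1.} Treat $\Sigma_k$ as constant in $\hatalp_k$. Differentiating gives
\[
\frac{\partial\lambda^*_k}{\partial\hatalp_k}=\frac{(\hatalp_k-\Sigma_k)-(\hatalp_k-\alpha_k)}{(\hatalp_k-\Sigma_k)^2}=\frac{\alpha_k-\Sigma_k}{(\hatalp_k-\Sigma_k)^2}.
\]
This is positive if and only if $\alpha_k>\Sigma_k$. The denominator is nonzero because Corollary~\ref{cor_l*_k_int} puts $\alpha_k$ strictly between $\hatalp_k$ and $\Sigma_k$.

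\textbf{Item 2.} Write $\Sigma_k=p_{kk}\alpha_k+g(p_{kk})$, using again $x^{sbr}_k=\alpha_k$ for the own-group term. The quotient rule then gives the displayed formula with numerator $(\hatalp_k-\alpha_k)(\alpha_k+g'(p_{kk}))$. For the signed claim, suppose $\lambda^*_j\in(0,1)$ for all $j$. Then every $x^{sbr}_j=\alpha_j$, so $g(p_{kk})=(1-p_{kk})\bar\alpha_k$ with $\bar\alpha_k:=\sum_{j\neq k}r_{kj}\alpha_j$, and $\alpha_k+g'=\alpha_k-\bar\alpha_k$. In this case $\Sigma_k=p_{kk}\alpha_k+(1-p_{kk})\bar\alpha_k$. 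By Corollary~\ref{cor_l*_k_int}, $\hatalp_k-\alpha_k$ has the same sign as $\alpha_k-\Sigma_k$. The latter equals $(1-p_{kk})(\alpha_k-\bar\alpha_k)$, so it has the sign of $\alpha_k-\bar\alpha_k$. The numerator is therefore a product of two terms with the same sign, hence positive, and $\lambda^*_k$ increases in $p_{kk}$.

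The main obstacle is justifying that the $x^{sbr}_j$ for $j\neq k$ are differentiable functions of $p_{kk}$ (and of $q_j$). I would take this from Lemma~\ref{lem_unique_x_k}: the linear system \eqref{eq_xopt_k_red} has a unique solution, so its coefficient matrix is invertible. With $\lambda$ held fixed, Cramer's rule then makes the solution a rational, hence smooth, function of $P$ and $q$.

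\textbf{Item 3.} Only $\Sigma_k$ depends on $q$, so differentiating \eqref{eq_l*_k} gives
\[
\frac{\partial\lambda^*_k}{\partial q_j}=\frac{\hatalp_k-\alpha_k}{(\hatalp_k-\Sigma_k)^2}\,\frac{\partial\Sigma_k}{\partial q_j},
\]
where $\partial\Sigma_k/\partial q_j$ includes the implicit derivatives of all $x^{sbr}_m$. For a group $\ell$ with $x^{sbr}_\ell=\alpha_\ell$, the term $q_\ell\alpha_\ell+(1-q_\ell)x^{sbr}_\ell=\alpha_\ell$ does not depend on $q_\ell$. I would argue that the solution of \eqref{eq_xopt_k_red} is then locally unaffected by $q_\ell$: at the efficient $\lambda^*_\ell$, group $\ell$'s behaviour is pinned to $\alpha_\ell$ whatever $q_\ell$ is. Hence $\partial\Sigma_k/\partial q_\ell=0$ and $\partial\lambda^*_k/\partial q_\ell=0$. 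No sign is claimed in the general case, so nothing further is needed there.
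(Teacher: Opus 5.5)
Your proposal is correct and follows the paper's own route: you differentiate \eqref{eq_l*_k} item by item and read the signs off Corollary~\ref{cor_l*_k_int}, and your decoupling remark (pinning $x^{sbr}_k=\alpha_k$ removes $\hat\alpha_k$ from the other groups' system) supplies the justification the paper omits when it simply holds $\Sigma_k$ fixed in Item~1. The same remark handles your Item~2 ``obstacle'' more cleanly than Cramer's rule at fixed $\lambda$, which is not the comparative-statics path because $\lambda_k^*$ moves with $p_{kk}$: since only row $k$ of $P$ changes, the $x^{sbr}_j$ with $j\neq k$ are constant in $p_{kk}$, so $g$ is linear, and your sign argument (with $q_j\alpha_j+(1-q_j)x^{sbr}_j$ in place of $\alpha_j$) in fact needs only interiority of $\lambda_k^*$.
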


The first result in Proposition~\ref{prop_cstat_l*_k} states that the relationship between $\lambda^*_k$ and $\hatalp_k$ (i.e., the degree of misspecification) depends on the relationship between $\alpha_k$ and $\Sigma_k$. If $\alpha_k > \Sigma_k$, then an increase in $\hatalp_k$ also determines an increase in the optimal degree of peer pressure. The opposite happens when $\alpha_k < \Sigma_k$. Interestingly, the result does not depend on the degree of the bias of misspecified agents in group $k$. 

The second result states that, for a group \(k\), peer pressure is increasing in within-group
assortativity, under a proportional reduction of out-group peer weights, provided that all
agents exert the optimal level of effort. This is intuitive: if every agent selects the optimal level of effort, the higher the level of homophily, the more intensely agents interact with peers who have
the same return to effort. As a consequence, conforming to peers becomes more attractive, since peer behavior is more likely to guide agents toward the optimal effort level \(\alpha_k\). If some agents do not exert effort efficiently, their behavior may also be influenced by the assortativity of group \(k\). In this case, changes in \(p_{kk}\) affect not only the composition of group \(k\)’s peers, but also the behavior of other groups through equilibrium interactions. As a result, the effect of \(p_{kk}\) depends on whether it moves \(\Sigma_k\) closer to \(\alpha_k\). If so, an increase in assortativity raises the efficient level of peer pressure. Otherwise, it may reduce it. For instance, a higher \(p_{kk}\) may increase the likelihood of interacting with agents whose effort is far from \(\alpha_k\), thereby lowering \(\lambda_k^*\).

An illustrative example with $K=2$ of the dependence on $p_{kk}$ when $x^{sbr}_k=\alpha_k$ for all $k$ can be found in Figure~\ref{fig:lj_bounded_K2}. Note that for $K=2$, only two weights matter in $E_k[x]$ as there are only two groups. Thus, we can simplify the peer composition process calling $p$ the assortativity in peer composition (a more detailed discussion on the case with $K=2$ can be found in Appendix~\ref{secB:res_2G}). In the figure, $\lambda^*_k >0$ for all $k \in \{1,2\}$ but reaches $1$ as $p$ goes to $1$ (similarly to what Proposition~\ref{prop_cstat_l*_k} predicts).

\begin{figure}[ht!]
    \centering
    \includegraphics[width=0.7\linewidth]{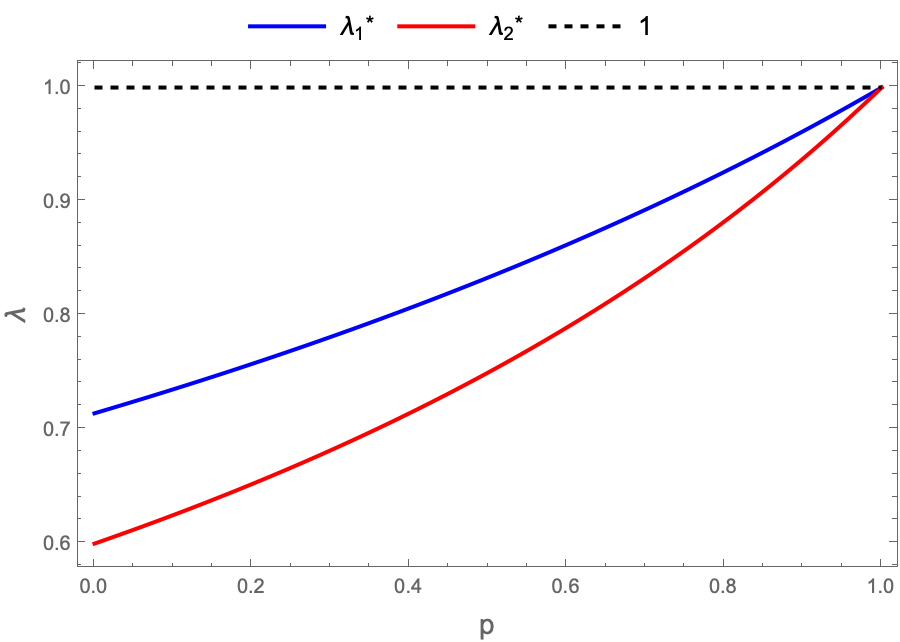}
    \caption{This graph depicts $\lambda^*_1$ and $\lambda^*_2$ as a function of $p$ fixing $\alpha_1 = 10$, $\alpha_2=12$, $\hat \alpha_1 = 5$, $\hat \alpha_2=15$.}
    \label{fig:lj_bounded_K2}
\end{figure}

The third result concerns the effect of the fraction of correctly specified agents on the efficient level of peer pressure. Notably, an increase in the likelihood of meeting a correctly specified agent from group $k$ does not affect the pressure experienced by group $k$ itself when $\lambda^*_k \in (0,1)$. This occurs because, in equilibrium, all agents in group $k$ exert effort equal to $\alpha_k$. The same result applies for all groups where also misspecified agents play the optimal level of effort.

By contrast, the effect of an increase in the likelihood of meeting a correctly specified agent from another group depends on how this change affects that group's behavior. If an increase in $q_j$ for some group $j$ such that $x^{sbr}_j \neq \alpha_j$ moves its subjective best response closer to $\alpha_k$, then it may also increase $\lambda^*_k$.

\section{The value and costs of social information}
\label{sec:policy}

We now interpret the equilibrium value of social monitoring as a reduced-form measure of the value that agents attach to peer information. This does not introduce a separate market for information into the model, but it allows us to discuss how an intermediary with access to peer data could extract informational rents.
The analysis in this section should be read as an interpretation of the equilibrium value of peer information within the model, rather than as a fully specified model of information sales by an intermediary.

In the evolutionarily stable self-confirming equilibrium with interior efficient peer pressure, agents choose effort equal to their true return $\alpha_k$. Peer pressure and misperceptions therefore need not generate allocative distortions in equilibrium: behavior coincides with the efficient benchmark.

Nevertheless, agents face a trade-off. While conditioning behavior on the social statistic $E_k[x]$ requires observing aggregate behavior and solving a fixed point problem, the autarchic rule $x=\hat\alpha_k$ ignores social information and is cognitively simpler. Although we do not explicitly model such computational or attention costs, the comparison highlights that monitoring entails a strictly more demanding decision problem.

Our analysis characterizes stationary outcomes. Outside these outcomes, misspecified agents need not best respond to the prevailing environment. In that case, conditioning behavior on aggregate peer outcomes may expose agents to temporary utility losses when the relevant social statistic changes. These costs may be larger when peer concerns $\lambda_k$ are stronger, since behavior then responds more sharply to variation in $E_k[x]$. We do not model such adjustment dynamics explicitly, and view this only as a possible interpretation of why social monitoring may be costly.

\paragraph{Perceived value of monitoring.}

Recall that a misspecified type $\psi=(\hat\alpha_k,\lambda_k)$ chooses effort to maximize the utility in Equation \eqref{eq_ut/peer_k_miss}. If the agent does not observe $E_k[x]$, a natural benchmark is the autarchic rule
$x=\hat\alpha_k$, which minimizes the individual loss term and ignores the social component. 
When $E_k[x]$ is observed, the subjective best response ($x^{sbr}_k$) is given by Equation \eqref{eq_xopt_k}.

The utility difference between these two choices can be computed from Equation \eqref{eq_ut/peer_k_miss}, and is
\[
u_\psi(x^{sbr}_k,E_k[x]) - u_\psi(\hat\alpha_k,E_k[x])
= \lambda_k^2 (E_k[x]-\hat\alpha_k)^2.
\]

We define
\begin{equation}\label{eq_Deltak}
\Delta_k = \lambda^2_k (E_k[x]-\hat\alpha_k)^2   
\end{equation}

as the \emph{perceived value of monitoring peer behavior}. 
This quantity captures the subjective gain from conditioning behavior on the observed social environment rather than following the simpler autarchic rule.

It is increasing and convex in \(\lambda_k\) and in the absolute perceived discrepancy
\(|E_k[x]-\hat{\alpha}_k|\). As a result, marginal increases in either peer concerns $\lambda$ or in the perceived discrepancy $|E_k[x]-\hat\alpha_k|$ raise the perceived value of monitoring more strongly when the other component is already large.

The perceived importance of social information is therefore self--reinforcing: larger conformity concerns and larger perceived discrepancies interact multiplicatively in determining this perceived value.
The key point is that social monitoring has value only because misspecified agents use peer behavior as a device to offset their distorted self-assessment. In this sense, the willingness to pay for peer information is endogenous to misspecification.

\paragraph{Informational rents.}

Suppose that observing $E_k[x]$ requires access to an informational intermediary. 
A discriminatory monopolist selling such information can extract, from each type $k$, up to the perceived surplus $\Delta_k$.




Proposition~\ref{prop_cstat_l*_k} suggests how the evolutionarily stable level of peer
concerns responds to the informational environment away from the interior Nash benchmark.
In particular, changes in within-group peer composition \(p_{kk}\) may affect both the
efficient level of peer pressure and the perceived discrepancy
\(|\hat{\alpha}_k-E_k[x]|\). The sign of these effects is generally ambiguous, because changes
in peer composition may move expected peer behavior either closer to or farther from the
true return \(\alpha_k\). Since \(\Delta_k\) is increasing in both \(\lambda_k\) and
\(|E_k[x]-\hat{\alpha}_k|\), such changes may either increase or decrease the scope for
informational rent extraction.


Depending on the effect of homophily, the mechanism may either reinforce peer concerns
or generate non-monotonic dynamics. Two scenarios can emerge. In the first, more segregated societies, either structurally, through higher within-group homophily $p_{kk}$, or in terms of perceived differences across types, may select stronger peer concerns in equilibrium. In the second, greater structural segregation may weaken peer effects, while segregation in beliefs may simultaneously increase the value of social information.


In the first scenario, the interplay between two forms of segregation—polarization, captured by the discrepancy between $\hat\alpha_k$ and $E_k[x]$, and homophily, captured by $p_{kk}$—generates a self-reinforcing cycle. A larger discrepancy increases the value of conformity, while greater homophily makes peer behavior more influential, leading to higher peer pressure. In turn, stronger peer concerns raise the perceived value of monitoring and expand the scope for informational rent extraction.

In the second scenario, the interplay between polarization and homophily generates opposing forces. Greater homophily reduces peer pressure, while a larger discrepancy between $\hatalp_k$ and $E_k[x]$ increases the value of conformity. As a result, the scope for informational rents depends on the trade-off between these two forces: the stronger the homophily effect, the lower the informational rents, whereas the stronger the discrepancy effect, the higher the informational rents.


\paragraph{Informational rents under interior equilibrium.}

We now analyze informational rents in the interior equilibrium, where agents select the efficient social pressure level and the subjective best response satisfies $x_k^{sbr}=\alpha_k$.

\begin{proposition}\label{prop:info_sce}
For each group, \(k\) under the efficient social pressure level,
\[
\Delta_k=(\hat{\alpha}_k-\alpha_k)^2.
\]
Hence, conditional on the induced effort level being optimal, the perceived value of monitoring
depends only on the degree of misspecification.
\end{proposition}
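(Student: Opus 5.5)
The plan is to substitute the interior characterization of the efficient level directly into the definition \eqref{eq_Deltak} of $\Delta_k$. Under the efficient social pressure level (in the interior case, as in the paragraph preceding the statement), the proof of Lemma~\ref{lem_unique_laE_k[x]} and Corollary~\ref{cor_selfconf_gen} give $x^{sbr}_k(\lambda^*_k)=\alpha_k$. In that case the peer statistic $E_k[x]$ coincides with $\Sigma_k$, evaluated at the equilibrium subjective best responses.

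The key step is the first-order condition \eqref{eq_xopt_k}, $x^{sbr}_k=(1-\lambda_k)\hatalp_k+\lambda_k E_k[x]$. Rearranged, it reads
\[
x^{sbr}_k-\hatalp_k=\lambda_k\bigl(E_k[x]-\hatalp_k\bigr).
\]
It follows that, for any $\lambda_k$, $\Delta_k=\lambda_k^2(E_k[x]-\hatalp_k)^2=(x^{sbr}_k-\hatalp_k)^2$. This is the same identity that underlies the utility difference computed just before \eqref{eq_Deltak}. Evaluating at $\lambda_k=\lambda^*_k$ and using $x^{sbr}_k=\alpha_k$ yields $\Delta_k=(\alpha_k-\hatalp_k)^2=(\hatalp_k-\alpha_k)^2$.

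As a cross-check, one can instead plug the closed form \eqref{eq_l*_k}, $\lambda^*_k=(\hatalp_k-\alpha_k)/(\hatalp_k-\Sigma_k)$, into \eqref{eq_Deltak} with $E_k[x]=\Sigma_k$. The factor $(\Sigma_k-\hatalp_k)^2$ then cancels and leaves $(\hatalp_k-\alpha_k)^2$. The denominator is nonzero precisely under the interiority conditions of Corollary~\ref{cor_l*_k_int}.

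The only delicate point is the scope of ``under the efficient social pressure level''. The identity $x^{sbr}_k=\alpha_k$ is guaranteed only when $\lambda^*_k\in(0,1)$, and the statement is explicitly conditional on the induced effort being optimal. At a corner $\lambda^*_k\in\{0,1\}$ the general identity $\Delta_k=(x^{sbr}_k-\hatalp_k)^2$ still holds, but $x^{sbr}_k$ need not equal $\alpha_k$. I would therefore state the proof for the interior case, noting that it holds for any group $k$ with $x^{sbr}_k=\alpha_k$, regardless of whether other groups are interior. This works because the derivation uses only group $k$'s own first-order condition and never the specific value of $E_k[x]$.
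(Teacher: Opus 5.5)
Your proposal is correct and is essentially the paper's argument. The paper's proof is exactly your ``cross-check'': it substitutes $\lambda^*_k=(\hat{\alpha}_k-\alpha_k)/(\hat{\alpha}_k-\Sigma_k)$ and $E_k[x]=\Sigma_k$ into \eqref{eq_Deltak} and cancels $(\Sigma_k-\hat{\alpha}_k)^2$. Your main route, which reads $\Delta_k=(x^{sbr}_k-\hat{\alpha}_k)^2$ directly off the first-order condition, is the same computation done before solving for $\lambda^*_k$. It has the minor advantage of never dividing by $\hat{\alpha}_k-\Sigma_k$, whose nonvanishing is implied by, but not equivalent to, the interiority conditions of Corollary~\ref{cor_l*_k_int}; so ``precisely'' in your cross-check should be weakened.
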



Proposition~\ref{prop:info_sce} identifies a sharp benchmark: conditional on the induced subjective-best-response being optimal, larger gaps between perceived and true returns generate a higher perceived value of social information and expand the scope for informational rent extraction by intermediaries.

It is important to stress that the mechanism identified in the model does not, in itself, generate permanent allocative inefficiency. In the evolutionarily stable self-confirming equilibrium, agents choose effort equal to their true return $\alpha_k$, so behavior coincides with the efficient benchmark. The social network is taken as given, and the long-run allocation does not hinge on the initial misspecifications.
The relevant costs arise instead on the informational side. Agents condition their choices on perceived fundamentals $\hat\alpha_k$ and on the observed social statistic $E_k[x]$, so the value of monitoring reflects the interaction between a given network structure and misspecified beliefs about types. In this sense, informational intermediaries need not distort fundamentals or the network directly; rather, they may extract value by providing access to a social signal that agents perceive as useful.

This observation suggests a broader interpretation of the mechanism. In environments where access to peer behavior is mediated by platforms, dashboards, or disclosure rules, the design of the informational environment may affect the perceived value of social information. For instance, institutions that selectively display peer outcomes may shape how informative others' behavior appears to be, and therefore how strongly agents rely on it in their decisions. More generally, environments with greater misspecification create more scope for social information to be valuable and, in turn, for intermediaries to extract informational rents.
These considerations should be interpreted as implications of the mechanism rather than as the output of a formal model of platform design or regulation. The main point is narrower: even when conformity does not create long-run distortions in behavior, the demand for social information can still have economic value, and this value is higher in environments where agents are more uncertain about their own payoff-relevant environment. 

In the next section, we discuss the empirical relevance of the model and how its mechanisms map into observed patterns of two streams of the literature.

\section{Interpretation and empirical relevance}\label{sec:disc}

The model suggests that conformity concerns should be more likely to emerge when agents are uncertain about their own returns and when peer behavior is informative about how to correct such misperceptions.
This section discusses two empirical environments in which the mechanism highlighted by the model may be relevant, and draws out a set of qualitative implications rather than sharp causal predictions.
These examples should be interpreted as illustrative domains of application, rather than as direct tests of the model.

\paragraph{Peer pressure in adolescence.} According to findings in the cognitive and developmental psychology literature, adolescence is the period in which individuals show heightened sensitivity to peer influence and social evaluation \citep{blakemore2014adolescence,knoll2017age}. Neuro-scientific accounts suggest that this pattern may reflect elevated neuro-plasticity during adolescence, together with ongoing structural and functional reorganization of social and affective brain networks. In this phase, cortical pruning and increasing white-matter connectivity coincide with heightened responsivity in reward-related systems, potentially rendering social signals particularly salient. Consistent with this interpretation, adolescents appear more susceptible than adults to peer effects in domains such as risk-taking behavior \citep{viner2012adolescence,knoll2015social}, prosocial behavior \citep{chierchia2020prosocial}, and affective responses to social exclusion \citep{sebastian2010social}. 

A growing body of work suggests that susceptibility to social influence increases when individuals face greater uncertainty about the outcomes of their actions. Adolescents may be more uncertain about these outcomes and therefore rely more strongly on social information when making decisions \citep[][]{ciranka2020bayesian,osmont2021peers,ciranka2025internal}. Moreover, adolescents may hold uncertain or unstable representations of their own preferences and, as a result, place greater weight on social cues. Building on this perspective, recent computational and longitudinal studies provide evidence that adolescents’ susceptibility to peer influence is partly driven by uncertainty about their own preferences, and that declining preference uncertainty predicts a reduction in social influence over time \citep[][]{moutoussis2016people,reiter2021preference}.
Our model is consistent with these findings and offers one possible mechanism behind them. In particular, agents in our model are more susceptible to peer influence the greater their degree of misspecification (Proposition~\ref{prop_cstat_l*_k}), that is, the greater their uncertainty about their own preferences. Moreover, as shown in Section~\ref{sec:policy}, the perceived value of social information is increasing in the degree of misspecification.

One possible interpretation is developmental. If uncertainty about $\alpha_{\psi,k}$ declines over time, individuals' perceived preferences $\hat{\alpha}_k$ would move closer to their true parameters $\alpha_k$. As this happens, both the strength of social pressure and the perceived importance of the network decline. This interpretation is consistent with empirical evidence suggesting that susceptibility to peer influence is stronger during adolescence and attenuates as preferences become more stable over time \citep{blakemore2014adolescence,moutoussis2016people,reiter2021preference}. In this sense, the model suggests a mechanism through which susceptibility to peer pressure may decrease with age, insofar as uncertainty about one's own preferences is progressively reduced.

\paragraph{Cascade effects in consumption.} A related empirical regularity concerns the timing of cascade effects in product markets. The social learning literature emphasizes that the actions of early adopters can strongly influence subsequent decisions, generating path-dependent information cascades \citep{bikhchandani2024information}. Such mechanisms are particularly relevant in environments where agents have limited prior information, as in the introduction of new products. Empirical studies in marketing and innovation similarly show that early diffusion stages are characterized by stronger peer influence and word-of-mouth amplification \citep{golder2004growing}. In these settings, the behavior of early adopters may provide a signal to later consumers about the likely attractiveness of the product.

This interpretation is related to the mechanism in our model. There, the behavior of better-informed agents can affect the choices of misspecified agents through social influence. In this sense, social signals may help individuals choose actions that are closer to those implied by their true payoff-relevant characteristics.

Using micro-level data, \citet{feldman2019social} show that reviews from early buyers influence subsequent purchases by new consumers. Social information therefore reduces the uncertainty faced by potential buyers and increases the likelihood of adoption in the early phases of product diffusion. Consistent with this mechanism, \citet{bailey2022peer} show that peer effects are particularly pronounced for newly introduced goods and attenuate as consumers accumulate experience with a specific product. These findings are consistent with the role of social information in our model. As shown in Section~\ref{sec:policy}, the perceived value of observing peers' behavior increases with the degree of misspecification of agents' own preferences. In environments where individuals are uncertain about their tastes, social signals therefore become particularly valuable for decision-making.

While much of this literature emphasizes uncertainty about product quality, a complementary interpretation is that consumers may also be uncertain about their own tastes when encountering novel goods. In such environments, observing others' choices may provide not only information about product characteristics, but also cues about the likely desirability of the product for oneself. Our model offers a complementary theoretical lens for thinking about this possibility. When preferences are misspecified, social signals have value because they help agents make decisions that are closer to those implied by their true payoff-relevant parameters. As experience accumulates and misspecification declines, the value of social information should decline as well. In this sense, the model suggests one mechanism through which social influence may be especially strong in the early stages of product diffusion, when both product characteristics and individual tastes are less well understood.

More generally, the model suggests that the value of social information need not coincide across market participants. Consumers may value social signals because these help them reduce uncertainty about their own payoff-relevant environment, whereas firms may value the same signals primarily through their effects on demand.

\section{Conclusions}\label{sec:concl}

We study a model in which conformity preferences may emerge as an efficient response to misspecified beliefs about one's own return to effort, even though social interactions do not directly enter material payoffs. We show that both the subjective best response and the efficient level of peer pressure are uniquely defined. For correctly specified agents, the efficient level of peer pressure is zero, whereas for misspecified ones it may be interior. In the latter case, it induces agents to choose effort equal to their true return and thus aligns behavior with the efficient benchmark.

A central implication of the model is that, when the efficient level of peer pressure is interior
for all misspecified agents, the induced subjective-best-response profile is both self-confirming
and Nash. Agents may therefore behave optimally without learning the true fundamentals: misspecified beliefs persist because the feedback generated along the equilibrium path does not contradict them. The intensity of peer pressure is in turn shaped by the degree of misspecification and by the structure of social interactions, 
with more homogeneous interactions leading to stronger conformity incentives when peers
behave optimally.

Finally, the model implies that the informational environment is not neutral. Although peer pressure does not generate long-run allocative distortions, observing others' behavior may still be valuable from the agents' subjective perspective and may therefore support informational rents. Conditional on the induced profile being Nash, the perceived value of social information
depends only on the degree of misspecification. As a result, environments in which agents are more uncertain about their own payoff-relevant environment also generate greater scope for informational rent extraction. More broadly, the paper provides a theoretical framework for understanding when conformity concerns may emerge, persist, and become stronger.


\setlength{\bibhang}{0pt}
\bibliographystyle{apalike}
\bibliography{biblio.bib}


\begin{appendices}

\renewcommand{\thetable}{A\arabic{table}}

\renewcommand{\thefigure}{A\arabic{figure}}

\renewcommand{\thelemma}{A.\arabic{lemma}}

\renewcommand{\thecorollary}{A.\arabic{corollary}}

\renewcommand{\theequation}{A\arabic{equation}}

\section{Proofs}\label{secA_proofs}

\begin{proof}[Proof of Lemma~\ref{lem_l=0_nonmiss}]

\text{ }

We start from the first-order condition in Equation~\eqref{eq_xopt_k} and note that, for a type \(\psi\) such that \(\alpha_{\psi,k}=\alpha_k\), we have
\[
x^{sbr}_\psi=(1-\lambda_\psi)\alpha_k+\lambda_\psi E_k[x].
\]

Whatever the value of \(E_k[x]\), substituting into Equation~\eqref{eq_pay_gen} gives
\[
\pi_k(x^{sbr}_\psi,\alpha_k)
=
C-\left(\alpha_k-\bigl((1-\lambda_\psi)\alpha_k+\lambda_\psi E_k[x]\bigr)\right)^2.
\]

Thus, for all values of \(E_k[x]\neq \alpha_k\), \(\lambda_\psi^*=0\) maximizes the above expression.

Evolutionary stability follows directly. Consider a small fraction \(\varepsilon\in(0,1)\) of a mutant type \(\psi'=(\alpha_k,\lambda_{\psi'})\), with \(\lambda_{\psi'}\neq 0\), invading the population. Let \(\psi=(\alpha_k,0)\) denote the incumbent type with correctly specified beliefs and no peer pressure. Then
\[
x^{sbr}_{\psi}(\varepsilon)=\alpha_k
\qquad \text{for all } \varepsilon\in(0,1),
\]
and therefore
\[
\pi_k(x^{sbr}_{\psi}(\varepsilon),\alpha_k)=C
\qquad \text{for all } \varepsilon\in(0,1).
\]
By contrast, any mutant with \(\lambda_{\psi'}\neq 0\) chooses
\[
x^{sbr}_{\psi'}(\varepsilon)=(1-\lambda_{\psi'})\alpha_k+\lambda_{\psi'}E_k[x],
\]
which differs from \(\alpha_k\) whenever \(E_k[x]\neq \alpha_k\). Hence
\[
\pi_k(x^{sbr}_{\psi}(\varepsilon),\alpha_k)
\geq
\pi_k(x^{sbr}_{\psi'}(\varepsilon),\alpha_k)
\qquad
\text{for all } \lambda_{\psi'}\in(0,1),\; \varepsilon\in(0,1).
\]
Therefore, type \(\psi\) satisfies Definition~\ref{defi_lam_ess}.
\end{proof}

\begin{proof}[Proof of Lemma~\ref{lem_unique_x_k}]
\text{ }

In order to prove that the system in Equation~\eqref{eq_xopt_k_red} is linear in $x^{sbr}_k$, we need to isolate $x^{sbr}_k$ in the right-hand term of the above equation
$$
x^{sbr}_k = (1-\lambda_k) \hatalp_k + \lambda_k 
\left(\sum_{j \neq k} p_{kj}\left( q_j \alpha_j + (1-q_j)x^{sbr}_j \right) + p_{kk}\left( q_k \alpha_k + (1-q_k)x^{sbr}_k \right) \right)
$$

We can now bring to the left-hand side all terms involving an unknown and obtain the following system
$$
x^{sbr}_k -\lambda_k 
\left(\sum_{j \neq k} p_{kj}\left( q_j \alpha_j + (1-q_j)x^{sbr}_j \right) + p_{kk}\left( q_k \alpha_k + (1-q_k)x^{sbr}_k \right) \right) = (1-\lambda_k) \hatalp_k
$$
$$
x^{sbr}_k(1 -p_{kk}\lambda_k (1-q_k)) -\lambda_k 
\left(\sum_{j \neq k} p_{kj}\left( q_j \alpha_j + (1-q_j)x^{sbr}_j \right) + p_{kk}\left( q_k \alpha_k \right) \right) = (1-\lambda_k) \hatalp_k
$$

which, ultimately, can be reworked as follows:
$$
x^{sbr}_k(1 -p_{kk}\lambda_k (1-q_k)) -\lambda_k 
\sum_{j \neq k} p_{kj} (1-q_j)x^{sbr}_j  = (1-\lambda_k) \hatalp_k + \lambda_k\sum_{j = 1}^K p_{kj}q_j\alpha_j
$$

The above equation defines a system with $K$ linear equations and $K$ unknowns. Isolating the coefficients on the left-hand side of the above equation, we can find the coefficient matrix of such system. The diagonal elements of such a matrix take the form
$$
a_{kk} = 1 -p_{kk}\lambda_k (1-q_k)
$$

while the off-diagonal elements are
$$
a_{kj} = -\lambda_k p_{kj} (1-q_j) \text{ for } j \neq k
$$





From standard results in linear algebra, the system has a unique solution if the
coefficient matrix has full rank. To show this, it is enough to show that the
coefficient matrix is strictly diagonally dominant. Since
\[
a_{kk}=1-p_{kk}\lambda_k(1-q_k)>0
\]
and
\[
a_{kj}=-\lambda_k p_{kj}(1-q_j)\leq 0
\qquad \text{for } j\neq k,
\]
strict diagonal dominance requires
\[
|a_{kk}|>\sum_{j\neq k}|a_{kj}|,
\]
that is,
\[
1-p_{kk}\lambda_k(1-q_k)
>
\lambda_k\sum_{j\neq k}p_{kj}(1-q_j).
\]
Equivalently,
\[
1>
\lambda_k\left[
p_{kk}(1-q_k)+\sum_{j\neq k}p_{kj}(1-q_j)
\right]
=
\lambda_k\sum_{j=1}^K p_{kj}(1-q_j).
\]
Since \(\lambda_k\leq 1\), \(\sum_{j=1}^Kp_{kj}=1\), and \(q_j\in(0,1)\) for all \(j\),
\[
\lambda_k\sum_{j=1}^K p_{kj}(1-q_j)<1.
\]
Hence the coefficient matrix is strictly diagonally dominant and therefore has full rank.
The system therefore admits a unique solution.

Moreover, the coefficient matrix has positive diagonal entries and non-positive off-diagonal
entries. Since it is strictly diagonally dominant, it is a nonsingular \(M\)-matrix. Hence its
inverse is entrywise non-negative.
\end{proof}

\begin{proof}[Proof of Lemma~\ref{lem_unique_laE_k[x]}]
\text{ }

We start by analyzing the maximization problem in~\eqref{eq_pay_gen}:
\[
\max_{\lambda_k\in[0,1]} \; C-\left(\alpha_k-x^{sbr}_k(\lambda_k)\right)^2 .
\]

Note that each \(x^{sbr}_k\) depends on all parameters of the model, namely \(q_j\), \(p_{kj}\), and \(\lambda_j\) for all \(j\), but for conciseness we make explicit only the dependence on \(\lambda_k\). The first-order condition for an interior solution is
\[
2\bigl(\alpha_k-x^{sbr}_k(\lambda_k)\bigr)\frac{\partial x^{sbr}_k(\lambda_k)}{\partial \lambda_k}=0.
\]

If \(x^{sbr}_k(\lambda_k)\) is monotone in \(\lambda_k\), then the only possible interior solution to the maximization problem is given by
\[
\alpha_k=x^{sbr}_k(\lambda_k).
\]

To establish this result, it is sufficient to study Equation~\eqref{eq_xopt_k}. There, $\lambda_k$ appears linearly, weighting $\hat{\alpha}_k$ and $E_k[x]$. While $\hatalp_k$ is exogenously given, the term $E_k[x]$ depends on $\lambda_k$. Specifically, $E_k[x]$ is a weighted average with non-negative weights of terms that depend on $\lambda_k$ as well (given the system of equations in \eqref{eq_xopt_k_red}). We already know from Lemma~\ref{lem_unique_x_k} that such system is well behaved, and admits a unique solution. 
The coefficient matrix associated with the system has positive diagonal entries,
non-positive off-diagonal entries, and is strictly diagonally dominant, as shown in
the proof of Lemma~\ref{lem_unique_x_k}. Hence it is a nonsingular \(M\)-matrix, and
its inverse is entrywise non-negative. Therefore, changes in the right-hand side of
the system propagate through the vector of subjective best responses without reversing
signs.
As a result, the direct shift induced by changing \(\lambda_k\) propagates through the
system without sign reversals. Hence, \(x^{sbr}_k(\lambda_k)\) is monotone in
\(\lambda_k\).

Therefore, solving the above maximization problem reduces to solving
\[
\alpha_k=x^{sbr}_k(\lambda_k).
\]
Since \(x^{sbr}_k(\lambda_k)\) is continuous and monotone in \(\lambda_k\), this equation has at most one solution. Hence, the maximization problem admits at most one interior solution. If no interior solution exists, then the optimum must be attained at a corner, namely at \(\lambda_k=0\) or \(\lambda_k=1\).
\end{proof}

\newpage

\begin{proof}[Proof of Proposition \ref{prop_l*k_ESS}]

\text{ }

To start the proof, note that when an interior efficient \(\lambda_k^*\) exists, each misspecified agent of group \(k\) with trait \((\hat{\alpha}_k,\lambda_k^*)\) maximizes the fitness function. Thus, to prove the stability of \(\lambda_k^*\), it is sufficient to show that if a small fraction \(\varepsilon\) of agents with \(\alpha_{\psi,k}=\hat{\alpha}_k\) and \(\lambda_k\neq\lambda_k^*\) invades, then incumbents still obtain strictly higher fitness than mutants.

Consider the system defined in Equation~\eqref{eq_xopt_k_red}. This system is continuous in the population shares. Now consider an invasion of mutants of type
\[
\psi'=(\hat{\alpha}_k,\lambda_k'), \qquad \lambda_k'\neq\lambda_k^*.
\]
That is, a fraction \(\varepsilon\) of misspecified agents in group \(k\) assigns a weight different from the efficient one. The implicit equation for the incumbent subjective best response in group \(k\) becomes
{\small 
\begin{equation*}
x^{sbr}_k(\varepsilon)
=
(1-\lambda_k^*)\hat{\alpha}_k
+
\lambda_k^*
\left[
p_{kk}\Bigl(q_k\alpha_k+(1-q_k)\bigl(\varepsilon x^{sbr}_{\psi'}(\varepsilon)+(1-\varepsilon)x^{sbr}_k(\varepsilon)\bigr)\Bigr)
+
\sum_{j\neq k} p_{kj}\bigl(q_j\alpha_j+(1-q_j)x^{sbr}_j(\varepsilon)\bigr)
\right].
\end{equation*}}

Hence \(x^{sbr}_k(\varepsilon)\) is continuous in \(\varepsilon\). By construction,
\[
x^{sbr}_k(0)=\alpha_k.
\]
Therefore,
\begin{equation}\label{eq_x*k_wm_revised}
\pi(x^{sbr}_k(\varepsilon),\alpha_k)
=
C-\bigl(\alpha_k-x^{sbr}_k(\varepsilon)\bigr)^2,
\end{equation}
and continuity implies
\[
\lim_{\varepsilon\to 0}\bigl(\alpha_k-x^{sbr}_k(\varepsilon)\bigr)^2=0.
\]

Now consider the mutant type \(\psi'\). Since the efficient social pressure level is unique and \(\lambda_k'\neq\lambda_k^*\), the mutant
does not choose the materially optimal action at \(\varepsilon=0\). Hence
\(x_{\psi'}^{sbr}(0)\neq \alpha_k\). 

Define
\[
\Delta:=\bigl(\alpha_k-x^{sbr}_{\psi'}(0)\bigr)^2>0.
\]
Again by continuity,
\begin{equation}\label{eq_x*mut_wm_revised}
\pi(x^{sbr}_{\psi'}(\varepsilon),\alpha_k)
=
C-\bigl(\alpha_k-x^{sbr}_{\psi'}(\varepsilon)\bigr)^2,
\end{equation}
with
\[
\lim_{\varepsilon\to 0}\bigl(\alpha_k-x^{sbr}_{\psi'}(\varepsilon)\bigr)^2=\Delta>0.
\]

Comparing \eqref{eq_x*k_wm_revised} and \eqref{eq_x*mut_wm_revised}, it follows that there exists \(\bar\varepsilon>0\) such that, for every \(\varepsilon\in(0,\bar\varepsilon)\),
\[
\pi(x^{sbr}_k(\varepsilon),\alpha_k)
>
\pi(x^{sbr}_{\psi'}(\varepsilon),\alpha_k).
\]
Therefore, after a sufficiently small invasion, incumbents with trait \(\lambda_k^*\) obtain strictly higher fitness than mutants with \(\lambda_k'\neq\lambda_k^*\). Hence \(\lambda_k^*\) satisfies Definition~\ref{defi_lam_ess}.
\end{proof}

\begin{proof}[Proof of Proposition~\ref{prop_cstat_l*_k}]

\text{ }

Let us start by studying the SBR of groups $k$ such that $x^{sbr}_k \neq \alpha_k$. Let $K_1$ be the partition of the population including groups such that $x^{sbr}_k=\alpha_k$ and $K_2$ be the partition of the population including groups such that $x^{sbr}_k \neq \alpha_k$. Let us call $x^{sbr}_{k2}$ the subjective best response of the second group. We know that $x^{sbr}_{k2}$ takes the form
\begin{equation}\label{eq_sbr_notne}
    x^{sbr}_{k_2}
=
(1-\lambda_{k_2})\hat{\alpha}_{k_2}
+\lambda_{k_2}
\left(
\sum_{j\in K_1} p_{kj}\alpha_j
+
\sum_{\ell\in K_2} p_{k\ell}
\bigl(q_\ell\alpha_\ell+(1-q_\ell)x_\ell^{sbr}\bigr)
\right).
\end{equation}

From this equation, we can derive some useful properties for studying the comparative
statics of \(\lambda_k^*\). Throughout, let
\[
\bar{x}_j:=q_j\alpha_j+(1-q_j)x_j^{sbr},
\qquad
\Sigma_k:=\sum_{j=1}^K p_{kj}\bar{x}_j .
\]
Thus, \(\Sigma_k\) is the average effort expected by agents in group \(k\).

We begin by studying the derivative with respect to \(\hatalp_k\). From
Equation~\eqref{eq_l*_k},
\[
\lambda_k^*=\frac{\hatalp_k-\alpha_k}{\hatalp_k-\Sigma_k}.
\]
Taking the direct derivative with respect to \(\hatalp_k\), holding the induced peer
statistic \(\Sigma_k\) fixed, gives
\begin{align*}
\frac{\partial \lambda_k^*}{\partial \hatalp_k}
&=
\frac{(\hatalp_k-\Sigma_k)-(\hatalp_k-\alpha_k)}
     {(\hatalp_k-\Sigma_k)^2}  \\
&=
\frac{\alpha_k-\Sigma_k}{(\hatalp_k-\Sigma_k)^2}.
\end{align*}
Since the denominator is strictly positive, the sign of the derivative is determined by
\(\alpha_k-\Sigma_k\). Hence
\[
\frac{\partial \lambda_k^*}{\partial \hatalp_k}>0
\quad\Longleftrightarrow\quad
\alpha_k>\Sigma_k .
\]

We now study the relationship between \(\lambda_k^*\) and \(p_{kk}\). Since the rows of
the peer-composition matrix sum to one, a change in \(p_{kk}\) must be accompanied by
a change in the remaining weights in row \(k\). We consider the natural proportional
variation
\[
p_{kj}(p_{kk})=(1-p_{kk})r_{kj}, \qquad j\neq k,
\]
where \(r_{kj}\geq 0\) and \(\sum_{j\neq k}r_{kj}=1\). This keeps the relative composition
of out-group peers fixed while varying the degree of within-group homophily.

Suppose first that \(\lambda_j^*\in(0,1)\) for all groups \(j\). Then
\(x_j^{sbr}=\alpha_j\) for all \(j\), and therefore
\[
\Sigma_k=\sum_{j=1}^K p_{kj}\alpha_j
=
p_{kk}\alpha_k+\sum_{j\neq k}p_{kj}\alpha_j .
\]
Let
\[
f(p_{kk}):=\sum_{j\neq k}p_{kj}(p_{kk})\alpha_j
=(1-p_{kk})A,
\qquad
A:=\sum_{j\neq k}r_{kj}\alpha_j .
\]
Then
\[
f'(p_{kk})=-A=-\frac{f(p_{kk})}{1-p_{kk}}.
\]
Using Equation~\eqref{eq_l*_k}, we can write
\[
\lambda_k^*
=
\frac{\hatalp_k-\alpha_k}
     {\hatalp_k-p_{kk}\alpha_k-f(p_{kk})}.
\]
Differentiating with respect to \(p_{kk}\) yields
\begin{align*}
\frac{\partial \lambda_k^*}{\partial p_{kk}}
&=
\frac{-(\hatalp_k-\alpha_k)(-\alpha_k-f'(p_{kk}))}
     {\left(\hatalp_k-p_{kk}\alpha_k-f(p_{kk})\right)^2} \\
&=
(\hatalp_k-\alpha_k)
\frac{\alpha_k+f'(p_{kk})}
     {\left(\hatalp_k-p_{kk}\alpha_k-f(p_{kk})\right)^2}  \\
&=
(\hatalp_k-\alpha_k)
\frac{\alpha_k-A}
     {\left(\hatalp_k-p_{kk}\alpha_k-f(p_{kk})\right)^2}.
\end{align*}
The denominator is strictly positive, so the sign of
\(\partial \lambda_k^*/\partial p_{kk}\) is the sign of
\[
(\hatalp_k-\alpha_k)(\alpha_k-A).
\]
Recall that, in this case,
\[
\Sigma_k=p_{kk}\alpha_k+(1-p_{kk})A.
\]

Consider first the case
\[
\hatalp_k<\alpha_k<\Sigma_k .
\]
Then \(\hatalp_k-\alpha_k<0\). Moreover,
\[
\alpha_k< p_{kk}\alpha_k+(1-p_{kk})A
\]
implies \(\alpha_k<A\), since \(p_{kk}\in(0,1)\). Hence
\[
(\hatalp_k-\alpha_k)(\alpha_k-A)>0,
\]
and therefore
\[
\frac{\partial \lambda_k^*}{\partial p_{kk}}>0.
\]

Consider next the case
\[
\Sigma_k<\alpha_k<\hatalp_k .
\]
Then \(\hatalp_k-\alpha_k>0\). Moreover,
\[
p_{kk}\alpha_k+(1-p_{kk})A<\alpha_k
\]
implies \(A<\alpha_k\). Hence
\[
(\hatalp_k-\alpha_k)(\alpha_k-A)>0,
\]
and again
\[
\frac{\partial \lambda_k^*}{\partial p_{kk}}>0.
\]
Thus, whenever \(\lambda_j^*\in(0,1)\) for all groups \(j\), the efficient level of peer
pressure for group \(k\) is increasing in \(p_{kk}\).

We now consider the case in which \(x_j^{sbr}\neq \alpha_j\) for at least one group \(j\).
In this case, \(\Sigma_k\) need not be linear in \(p_{kk}\), because the subjective best
responses of other groups may also vary with \(p_{kk}\) through the equilibrium fixed
point. Define
\[
g(p_{kk})
:=
\sum_{j\neq k}p_{kj}(p_{kk})
\left(q_j\alpha_j+(1-q_j)x_j^{sbr}(p_{kk})\right),
\]
so that
\[
\Sigma_k=p_{kk}\alpha_k+g(p_{kk}).
\]
Then Equation~\eqref{eq_l*_k} becomes
\[
\lambda_k^*
=
\frac{\hatalp_k-\alpha_k}
     {\hatalp_k-p_{kk}\alpha_k-g(p_{kk})}.
\]
Differentiating gives
\[
\frac{\partial \lambda_k^*}{\partial p_{kk}}
=
(\hatalp_k-\alpha_k)
\frac{\alpha_k+g'(p_{kk})}
     {\left(\hatalp_k-p_{kk}\alpha_k-g(p_{kk})\right)^2}.
\]
Thus the sign of the derivative depends on the sign of
\[
(\hatalp_k-\alpha_k)\bigl(\alpha_k+g'(p_{kk})\bigr).
\]
Equivalently, if \(\hatalp_k<\alpha_k\), then
\[
\frac{\partial \lambda_k^*}{\partial p_{kk}}>0
\quad\Longleftrightarrow\quad
\alpha_k+g'(p_{kk})<0,
\]
whereas if \(\hatalp_k>\alpha_k\), then
\[
\frac{\partial \lambda_k^*}{\partial p_{kk}}>0
\quad\Longleftrightarrow\quad
\alpha_k+g'(p_{kk})>0.
\]
Hence, outside the case in which all groups choose the materially optimal effort level,
the effect of \(p_{kk}\) is generally ambiguous.

Finally, consider the effect of the fractions \(q_j\). Since
\[
\lambda_k^*
=
\frac{\hatalp_k-\alpha_k}{\hatalp_k-\Sigma_k},
\]
we have
\[
\frac{\partial \lambda_k^*}{\partial q_j}
=
\frac{\hatalp_k-\alpha_k}{(\hatalp_k-\Sigma_k)^2}
\frac{\partial \Sigma_k}{\partial q_j}.
\]
Therefore the sign of \(\partial \lambda_k^*/\partial q_j\) depends on the sign of
\((\hatalp_k-\alpha_k)\partial\Sigma_k/\partial q_j\).

If for some group $\ell$, \(x_\ell^{sbr}=\alpha_\ell\), then
\[
\bar{x}_\ell=q_\ell\alpha_\ell+(1-q_\ell)\alpha_\ell=\alpha_\ell
\]
and hence \(\Sigma_k=\sum_{\ell}p_{k\ell}\alpha_\ell\) is independent of \(q_{\ell}\). Therefore, for such group $\ell$
\[
\frac{\partial \lambda_k^*}{\partial q_{\ell}}=0
\]

If instead \(x_j^{sbr}\neq \alpha_j\) for some group \(j\), then changes in \(q_j\)
may affect \(\Sigma_k\) both directly, through the average behavior of group \(j\), and
indirectly, through the equilibrium subjective best responses of other groups. In full
generality,
\[
\frac{\partial \Sigma_k}{\partial q_j}
=
p_{kj}
\left[
\alpha_j-x_j^{sbr}
+
(1-q_j)\frac{\partial x_j^{sbr}}{\partial q_j}
\right]
+
\sum_{\ell\neq j}
p_{k\ell}(1-q_\ell)
\frac{\partial x_\ell^{sbr}}{\partial q_j}.
\]
Consequently,
\[
\frac{\partial \lambda_k^*}{\partial q_j}
=
\frac{\hatalp_k-\alpha_k}{(\hatalp_k-\Sigma_k)^2}
\left\{
p_{kj}
\left[
\alpha_j-x_j^{sbr}
+
(1-q_j)\frac{\partial x_j^{sbr}}{\partial q_j}
\right]
+
\sum_{\ell\neq j}
p_{k\ell}(1-q_\ell)
\frac{\partial x_\ell^{sbr}}{\partial q_j}
\right\}.
\]
Since the terms inside braces may be positive or negative, the effect of \(q_j\) on
\(\lambda_k^*\) is generally ambiguous.
\end{proof}

\begin{proof}[Proof of Proposition~\ref{prop:info_sce}]

\text{ }

Suppose that, for a group $k$, $\lambda^*_k\in (0,1)$ as in Equation~\eqref{eq_l*_k} and
$x_k^{sbr}=\alpha_k$. For such group,
\[
E_k[x]=\Sigma_k.
\]
Moreover, by Equation~\eqref{eq_l*_k},
\[
\lambda_k^*
=
\frac{\hatalp_k-\alpha_k}{\hatalp_k-\Sigma_k}.
\]
Substituting this expression into Equation~\eqref{eq_Deltak}, we obtain
\begin{align*}
\Delta_k
&=
(\lambda_k^*)^2(E_k[x]-\hatalp_k)^2 \\
&=
\frac{(\hatalp_k-\alpha_k)^2}{(\hatalp_k-\Sigma_k)^2}
(\Sigma_k-\hatalp_k)^2 \\
&=
(\hatalp_k-\alpha_k)^2.
\end{align*}
\end{proof}

\renewcommand{\thetable}{B\arabic{table}}

\renewcommand{\thefigure}{B\arabic{figure}}

\renewcommand{\thelemma}{B.\arabic{lemma}}

\renewcommand{\thecorollary}{B.\arabic{corollary}}

\renewcommand{\theequation}{B\arabic{equation}}

\section{Explicit results with two groups}\label{secB:res_2G}

This appendix specializes the model to the case of two groups in order to make the main comparative statics fully explicit. The purpose is twofold. First, the results below provide closed-form counterparts to the general characterizations in the main text and clarify the roles of homophily, relative returns, and misspecification. Second, the two-group case allows us to characterize the boundary cases in which the efficient level of peer pressure is corner, i.e.\ $\lambda_k^*\in\{0,1\}$, which are harder to study in the general model. Since there are only two groups, we simplify notation by writing $p_{kk}=p$ and $p_{kj}=1-p$.

We start this subsection by applying the general results to the case $K=2$. Firstly, from Equation~\eqref{eq_xopt_k} and~\eqref{eq_E[x]_k}, we know that
\begin{align}
&x^{sbr}_1 = (1-\lambda_1) \hat \alpha_1 +  \lambda_1 E_1[x]\label{eq_eff_1} \\
&x^{sbr}_2 = (1-\lambda_2) \hat \alpha_2 +  \lambda_2 E_2[x] \label{eq_eff_2} \\
&E_1[x] = p \cdot \left(q_1 \alpha_1 + (1-q_1) x^{sbr}_1 \right) + (1-p)\cdot \left( q_2 \alpha_2 + (1-q_2) x^{sbr}_2\right) \label{eq_Ex1}\\
&E_2[x] = (1-p) \cdot \left(q_1 \alpha_1 + (1-q_1) x^{sbr}_1 \right) + p \cdot \left( q_2 \alpha_2 + (1-q_2) x^{sbr}_2\right)\label{eq_Ex2}
\end{align}

To derive $x^{sbr}_1$ and $x^{sbr}_2$ we need to solve a system of equations with the above equations holding simultaneously. We know from Lemma~\ref{lem_unique_x_k} that the above system admits a unique solution. Specifically, algebraic derivations reveal that
{\footnotesize
$$
x^{sbr}_1 = \frac{\lambda_1(-\hat \alpha_2(\lambda_2-1)(p-1)(q_2-1)+\alpha_1pq_1+\alpha_1\lambda_2(2p-1)q_1(q_2-1)-\alpha_2pq_2+\alpha_2q_2)-\hat \alpha_1(\lambda_1-1)(\lambda_2p(q_2-1)+1)}{\lambda_1p(q_1-1)+\lambda_2(q_2-1)(\lambda_1(2p-1)(q_1-1)+p)+1} 
$$
$$
x^{sbr}_2 = \frac{\lambda_2(-\hat \alpha_1(\lambda_2-1)(p-1)(q_1-1)+\alpha_1(-p)q_1+\alpha_2\lambda_2(2p-1)(q_1-1)q_2+\alpha_2pq_2+\alpha_1q_1)-\hat \alpha_2(\lambda_2-1)(\lambda_2p(q_1-1)+1)}{\lambda_2p(q_1-1)+\lambda_2(q_2-1)(\lambda_2(2p-1)(q_1-1)+p)+1}
$$
}

Solving the maximization problem in Equation~\eqref{eq_pay_gen}, we can derive $\lambda^*_1$ and $\lambda^*_2$. Results are the same as in Equation~\eqref{eq_l*_k}. Specifically,
\begin{equation}\label{eq_lambda*1}
\lambda^*_1 = \frac{\hat \alpha_1-\alpha_1}{\hat \alpha_1-(\alpha_1 p + \alpha_2 (1-p))}    
\end{equation}
\begin{equation}\label{eq_lambda*2}
\lambda^*_2 = \frac{\hat \alpha_2-\alpha_2}{\hat \alpha_2-(\alpha_2 p+\alpha_1 (1-p))}    
\end{equation}

We have previously shown the importance of $\lambda^*_k$ being bounded between $0$ and $1$ and also the conditions for this to happen (see Corollary~\ref{cor_selfconf_gen} and~\ref{cor_l*_k_int}). Trivially, all results in Section~\ref{sec:res} apply for the case with $K=2$ as long as Corollary~\ref{cor_l*_k_int} holds. We proceed the analysis with $K=2$ by studying the conditions under which $\lambda^*_1$ or $\lambda^*_2$ take corner solutions.

\paragraph{Extreme or null peer pressure.} We firstly focus on the conditions for which one single $\lambda^*_j$ is either $0$ or $1$. To find such conditions, we focus on the cases for which $\lambda^*_j=0$ is a lower bound (meaning that the value would have been negative otherwise) or for which $\lambda^*_j=1$ is an upper bound (meaning the value would have been greater than $1$ otherwise).

We begin by deriving the conditions under which $\lambda^*_k \in \{0,1\}$ in the following corollary. 

\begin{corollary}\label{cor_ljextr_K2}

Suppose $K=2$, and consider $j \neq -j \in \{1,2\}$. $\lambda^*_j =1$ if one of the following conditions holds:
\begin{enumerate}
    \item $0 < \alpha_j < \hat{\alpha}_j \leq \alpha_{-j},\ 
    \dfrac{\alpha_{-j} - \hat{\alpha}_j}{\alpha_{-j} - \alpha_j} < p < 1;$
    \item $0 < \alpha_j < \alpha_{-j} < \hat{\alpha}_j;$
    \item $0 < \hat{\alpha}_j \leq \alpha_{-j} < \alpha_j;$
    \item $0<\alpha_{-j} < \hat{\alpha}_j < \alpha_j,\ 
    \dfrac{\alpha_{-j} - \hat{\alpha}_j}{\alpha_{-j} - \alpha_j} < p < 1.$
\end{enumerate}

Similarly, $\lambda^*_j =0$ if one of the following conditions holds:
\begin{enumerate}
    \item $0 < \alpha_j \leq \hat{\alpha}_j < \alpha_{-j},\quad 
    0 < p < \dfrac{\alpha_{-j} - \hat{\alpha}_j}{\alpha_{-j} - \alpha_j};$
    \item $0<\alpha_{-j} < \hat{\alpha}_j \leq \alpha_j,\quad 
    0 < p < \dfrac{\alpha_{-j} - \hat{\alpha}_j}{\alpha_{-j} - \alpha_j}.$
\end{enumerate}

\end{corollary}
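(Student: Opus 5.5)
The plan is to work from the closed-form expression \eqref{eq_lambda*1}--\eqref{eq_lambda*2}, i.e.
\[
\tilde\lambda_j(p)=\frac{\hatalp_j-\alpha_j}{\hatalp_j-\bigl(p\alpha_j+(1-p)\alpha_{-j}\bigr)},
\]
which is the unconstrained solution of $x^{sbr}_j=\alpha_j$ when the other group plays $\alpha_{-j}$. The claim $\lambda^*_j=1$ (resp.\ $0$) should then be read as saying that $\tilde\lambda_j$ lies weakly above $1$, is undefined, or has the wrong sign so that the optimum is pushed to the upper corner (resp.\ $\tilde\lambda_j\le 0$, with the corner at $0$). By Lemma~\ref{lem_unique_laE_k[x]}, $x^{sbr}_j(\lambda_j)$ is continuous and monotone. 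Hence the material payoff $C-(\alpha_j-x^{sbr}_j(\lambda_j))^2$ is single-peaked in $\lambda_j$, and it is maximized at the endpoint of $[0,1]$ closest to the unconstrained root. Throughout, I denote $M_j:=p\alpha_j+(1-p)\alpha_{-j}$, the peer average. The entire argument then reduces to sign comparisons among $\hatalp_j$, $\alpha_j$ and $M_j$.

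The key steps are as follows.

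First, I establish the monotonicity direction. At $\lambda_j=0$ we have $x^{sbr}_j=\hatalp_j$, and $x^{sbr}_j$ moves toward $M_j$ as $\lambda_j$ grows. So the optimum is at $\lambda_j=1$ exactly when $\alpha_j$ lies beyond $M_j$ as seen from $\hatalp_j$, or $M_j$ lies on the far side. It is at $\lambda_j=0$ exactly when moving toward $M_j$ moves away from $\alpha_j$, i.e.\ $\alpha_j$ and $M_j$ lie on opposite sides of $\hatalp_j$ (or $\alpha_j=\hatalp_j$).

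Second, I translate these conditions into the parameter orderings. Since $M_j$ is a convex combination of $\alpha_j$ and $\alpha_{-j}$, the condition that $M_j$ lies between $\hatalp_j$ and $\alpha_j$ (corner at $1$) means:
\begin{itemize}
\item when $\alpha_j<\hatalp_j$, that $M_j\ge \hatalp_j$ fails, i.e.\ $M_j<\hatalp_j$. I would handle this case by case. If $\alpha_{-j}<\hatalp_j$, then $M_j<\hatalp_j$ for every $p$ (condition 2). If $\hatalp_j\le \alpha_{-j}$, solving $p\alpha_j+(1-p)\alpha_{-j}<\hatalp_j$ gives $p>(\alpha_{-j}-\hatalp_j)/(\alpha_{-j}-\alpha_j)$ (condition 1).
\item when $\hatalp_j<\alpha_j$, the symmetric analysis gives conditions 3 and 4.
\end{itemize}
The complementary inequality on $p$ in the configurations $\alpha_j\le\hatalp_j<\alpha_{-j}$ and its mirror places $M_j$ on the opposite side of $\hatalp_j$ from $\alpha_j$. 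This makes $\tilde\lambda_j<0$, so $\lambda^*_j=0$.

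The main obstacle is justifying that the other group's effort equals $\alpha_{-j}$ in these formulas, since a corner for group $j$ feeds back through $E_{-j}[x]$. I would first state the corollary under the maintained hypothesis that $-j$ plays $\alpha_{-j}$ (as \eqref{eq_lambda*1} implicitly assumes, with $\lambda^*_{-j}$ interior). I would then check separately that the feedback does not flip signs. Because, by the $M$-matrix property in Lemma~\ref{lem_unique_x_k}, the responses move monotonically, the side of $\hatalp_j$ on which the true peer statistic $\Sigma_j$ lies coincides with that of $M_j$ in the listed strict regions. Borderline equalities would be handled by the weak inequalities as written.
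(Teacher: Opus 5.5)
Your route is the paper's own. The paper also gets this list only by asking when the closed form \eqref{eq_lambda*1} falls below $0$ or above $1$, and that formula was derived with every $x^{sbr}_k$ held at $\alpha_k$. The gap lies in the step that turns this into a statement about $\lambda^*_j$.

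\textbf{Own-group feedback.} You assert two things: that $x^{sbr}_j$ moves from $\hat\alpha_j$ toward $M_j$ as $\lambda_j$ rises, and that the payoff peaks at the endpoint of $[0,1]$ nearest the root $\tilde\lambda_j$. Both ignore that $E_j[x]$ contains group $j$'s own misspecified agents.
\begin{itemize}
\item Even with group $-j$ fixed at $\alpha_{-j}$, $E_j[x]=M_j+p(1-q_j)(x_j-\alpha_j)$.
\item Hence $x^{sbr}_j(\lambda_j)-\alpha_j=\bigl[(1-\lambda_j)(\hat\alpha_j-\alpha_j)+\lambda_j(M_j-\alpha_j)\bigr]/\bigl(1-\lambda_j p(1-q_j)\bigr)$.
\item The $\lambda_j=1$ endpoint is therefore $x_j(1)=\bigl(pq_j\alpha_j+(1-p)\alpha_{-j}\bigr)/(1-p+pq_j)$. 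This point lies strictly between $M_j$ and $\alpha_{-j}$, not at $M_j$.
\item Equivalently, $\tilde\lambda_j$ can lie beyond the pole $1/(p(1-q_j))$. Then $\tilde\lambda_j>1$, yet $x^{sbr}_j$ moves \emph{away} from $\alpha_j$ on all of $[0,1]$.
\end{itemize}
The correct upper-corner test is whether $x_j(1)$, not $M_j$, lies between $\alpha_j$ and $\hat\alpha_j$. Because $x_j(1)$ lies beyond $M_j$, both $\lambda^*_j=0$ items and items 2--3 for $\lambda^*_j=1$ survive under that hypothesis.

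Items 1 and 4 for $\lambda^*_j=1$, however, are false as stated. They require $(1-p)|\alpha_{-j}-\hat\alpha_j|<pq_j|\hat\alpha_j-\alpha_j|$, which is the stated threshold with an extra factor $q_j$. Concretely, $\alpha_j=1$, $\hat\alpha_j=2$, $\alpha_{-j}=3$, $p=0.6$, $q_j=0.1$ satisfies item 1. With group $-j$ playing $3$:
\begin{itemize}
\item $E_j[x]=1.26+0.54x_j$;
\item $x^{sbr}_j(\lambda_j)=(2-0.74\lambda_j)/(1-0.54\lambda_j)$, which rises from $2$ to about $2.74$;
\item so $\lambda^*_j=0$.
\end{itemize}
This configuration is consistent. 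With $\hat\alpha_{-j}=5$ and $q_{-j}=0.5$, group $-j$'s efficient level is interior (about $0.82$) and it plays exactly $3$. Under full cross-group feedback $\lambda^*_j=0$ persists.

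The same oversight breaks your reading of the degenerate cases. If $\hat\alpha_j=M_j$, then $|x^{sbr}_j-\alpha_j|=|\hat\alpha_j-\alpha_j|/(1-\lambda_jp(1-q_j))$ grows in $\lambda_j$, so $\lambda^*_j=0$. Likewise, a negative root never yields the upper corner.

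\textbf{Cross-group feedback.} Your patch here also fails. The $M$-matrix property only gives sign-preserving comparative statics. It does not locate $\Sigma_j$ relative to $\hat\alpha_j$, and $x^{sbr}_{-j}$ is anchored at $\hat\alpha_{-j}$, which the listed conditions leave unrestricted. Take $\alpha_j=1$, $\hat\alpha_j=1.1$, $\alpha_{-j}=3$, $\hat\alpha_{-j}=1.1$, $p=0.5$, $q_j=0.5$, $q_{-j}=0.01$. This satisfies item 1 for $\lambda^*_j=0$. Yet in every configuration group $-j$ plays between roughly $1.04$ and $1.1$. So for every $\lambda_{-j}$, group $j$'s path runs from $1.1$ down to a point in roughly $[1.04,1.08]$, and $\lambda^*_j=1$.

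The corollary can therefore hold only conditionally on group $-j$ playing $\alpha_{-j}$. That is the hypothesis built into \eqref{eq_lambda*1}, and it is how the paper uses the result. Even then, items 1 and 4 must be corrected by replacing $M_j$ with $x_j(1)$ in your sign analysis.
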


Since we obtain $\lambda^*_k$ in Equation~\ref{eq_l*_k} by fixing all $x^{sbr}_k=\alpha_k$, it is clear that it is sufficient that $\lambda^*_k$ admits corner solutions for at least one $k \in K$ for Corollary~\ref{cor_selfconf_gen} to cease to apply.
Since it is computationally too heavy to study the effort levels under corner solutions for $K \geq 3$, the case with $K=2$ becomes particularly useful for conducting this kind of analysis. Interestingly, from Corollary~\ref{cor_ljextr_K2}, we can derive the conditions such that both $\lambda^*_j$ take corner solutions.

\begin{corollary}\label{cor_lextrboth_K2}
Suppose $K=2$, and consider $j \neq -j \in \{1,2\}$. A sufficient condition for $\lambda^*_j \in \{0,1\}$ for all $j \in \{1,2\}$ is 
$$\min\{\alpha_j,\alpha_{-j}\} < \min\{\hat\alpha_j,\hat\alpha_{-j}\} < \max\{\hat\alpha_j,\hat\alpha_{-j}\} < \max\{\alpha_j,\alpha_{-j}\}.$$
\end{corollary}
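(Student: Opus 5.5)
The plan is to reduce the statement to Corollary~\ref{cor_ljextr_K2}, applying it once to each group. I will show that under the stated ordering each group falls into a complementary pair of cases of that corollary, one giving $\lambda^*=1$ and one giving $\lambda^*=0$, which together cover essentially every value of $p\in(0,1)$. By symmetry of the two-group model in the labels, I assume without loss of generality that $\alpha_j<\alpha_{-j}$. The hypothesis then reads
\[
\alpha_j<\min\{\hat\alpha_j,\hat\alpha_{-j}\}\le\max\{\hat\alpha_j,\hat\alpha_{-j}\}<\alpha_{-j}.
\]
So both perceived returns lie strictly inside the interval spanned by the true returns. All returns are positive because $\alpha_k\in\mathbb{R}_+$; the degenerate case $\alpha_j=0$ is harmless, since the positivity requirements in Corollary~\ref{cor_ljextr_K2} only enter through the formulas \eqref{eq_lambda*1}--\eqref{eq_lambda*2}.

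\emph{Group $j$.} Here $0<\alpha_j<\hat\alpha_j<\alpha_{-j}$. Set $\bar p_j=(\alpha_{-j}-\hat\alpha_j)/(\alpha_{-j}-\alpha_j)$, which lies in $(0,1)$.
\begin{itemize}
\item If $p>\bar p_j$, case~1 of the $\lambda^*_j=1$ list in Corollary~\ref{cor_ljextr_K2} applies, so $\lambda^*_j=1$.
\item If $p<\bar p_j$, case~1 of the $\lambda^*_j=0$ list applies, so $\lambda^*_j=0$.
\end{itemize}

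\emph{Group $-j$.} I apply the same corollary with the roles of the indices swapped, so the ``own'' group is $-j$ and the ``other'' group is $j$. Now $0<\alpha_j<\hat\alpha_{-j}<\alpha_{-j}$, which is exactly the configuration ``other return $<$ perceived own return $<$ own return''. Set $\bar p_{-j}=(\alpha_j-\hat\alpha_{-j})/(\alpha_j-\alpha_{-j})$, which is again in $(0,1)$.
\begin{itemize}
\item If $p>\bar p_{-j}$, case~4 of the $\lambda^*=1$ list applies, so $\lambda^*_{-j}=1$.
\item If $p<\bar p_{-j}$, case~2 of the $\lambda^*=0$ list applies, so $\lambda^*_{-j}=0$.
\end{itemize}
Hence, away from the two threshold values of $p$, both efficient levels are corners.

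The main obstacle is the knife-edge $p=\bar p_j$ (and likewise $p=\bar p_{-j}$), which Corollary~\ref{cor_ljextr_K2} leaves out because it uses strict inequalities. At this value, $p\alpha_j+(1-p)\alpha_{-j}=\hat\alpha_j$, so the denominator in \eqref{eq_lambda*1} vanishes. I would handle it directly from \eqref{eq_eff_1}, checking the consistency of the peer statistic within the fixed point rather than only through the reduced formula.
\begin{itemize}
\item If the peer statistic faced by group $j$ equals $\hat\alpha_j$, then $x^{sbr}_j=\hat\alpha_j$ for every $\lambda_j$. Material payoff is then constant in $\lambda_j$, so the corners $\lambda_j\in\{0,1\}$ are (weakly) efficient.
\item Alternatively, I would argue by continuity of the induced payoff in $p$ from both sides: the limit of the corner optimum as $p\to\bar p_j$ remains a maximizer.
\end{itemize}
This gives $\lambda^*_j\in\{0,1\}$ for all $p$, and the same argument applies to $-j$. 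Since the statement is only a sufficient condition, it suffices to note that at these thresholds no interior $\lambda$ satisfies $x^{sbr}(\lambda)=\alpha$. By the proof of Lemma~\ref{lem_unique_laE_k[x]}, the optimum must then lie at a corner.
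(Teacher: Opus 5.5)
Your proposal is correct and is essentially the paper's own argument. The paper obtains this corollary directly from Corollary~\ref{cor_ljextr_K2}, and your case matching is exactly that derivation made explicit: case~1 of both lists for the lower-return group, and, with indices swapped, case~4 of the $\lambda^*=1$ list and case~2 of the $\lambda^*=0$ list for the higher-return group. Your treatment of the two threshold values of $p$, which the paper passes over, only shows that a corner is among the maximizers there, not that it is the unique one.
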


\paragraph{Effort under extreme or null peer pressure.} Firstly, we consider the case of $\lambda^*_j=0$. Trivially, if $\lambda^*_j=0$, $x^{sbr}_j = \hat \alpha_j$. Thus, if $\lambda^*_1=\lambda^*_2=0$, we obtain a trivial case in which no agent develops peer pressure and all agents play according to their perceived return to effort $\alpha_{\psi,k}$.

We know from Lemma~\ref{lem_unique_laE_k[x]} that, when the efficient level of social
pressure is interior, it satisfies \(x_k^{sbr}=\alpha_k\). By contrast, when the efficient level
is at a corner, \(\lambda_k^*\in\{0,1\}\), the subjective best response need not coincide with
\(\alpha_k\). This is why the corner cases are useful for understanding when the equivalence
between the induced profile and the Nash benchmark breaks down. 

We start from the case when $\lambda^*_j=0$, but $\lambda^*_{-j}=1$. This case can happen under the conditions in Corollary~\ref{cor_lextrboth_K2} and the right assortativity in peer composition.
In this case, \(x^{sbr}_j=\hat\alpha_j\), while \(x^{sbr}_{-j}\) is given by
$$
x^{sbr}_{-j}|_{\lambda^*_j=0,\lambda^*_{-j}=1} = \frac{(1-p)(\alpha_jq_j+\hat \alpha_j(1-q_j))+\alpha_{-j}pq_{-j}}{1-p(1-q_{-j})}
$$

As before, it can be verified that the distance between $x^{sbr}_{-j}$ and $\alpha_{-j}$ decreases in $p$. 

\begin{corollary}\label{cor_|x-a|_lj1l-j0_K2}
Suppose $K=2$, if $\lambda^*_{j}=0$ and $\lambda^*_{-j}=1$, then $|x^{sbr}_{j}|_{\lambda^*_j=0,\lambda^*_{-j}=1}-\alpha_j|$ is constant in $p$, but $|x^{sbr}_{-j}|_{\lambda^*_{j}=0,\lambda^*_{-j}=1}-\alpha_{-j}|$ is decreasing in $p$.
\end{corollary}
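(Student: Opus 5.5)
The argument treats the two groups separately.

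\emph{Group $j$.} Since $\lambda^*_j=0$, the first-order condition in Equation~\eqref{eq_eff_1} (or \eqref{eq_eff_2}) gives $x^{sbr}_j=\hat\alpha_j$, whatever the value of $E_j[x]$. Hence
\[
|x^{sbr}_j-\alpha_j|=|\hat\alpha_j-\alpha_j|,
\]
which does not involve $p$. This settles the first claim.

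\emph{Group $-j$.} Set $\lambda_{-j}=1$ in the fixed-point equation $x_{-j}=E_{-j}[x]$, with $E_{-j}[x]=(1-p)\bar x_j+p\,(q_{-j}\alpha_{-j}+(1-q_{-j})x_{-j})$ and $\bar x_j:=q_j\alpha_j+(1-q_j)\hat\alpha_j$. Solving reproduces the displayed formula
\[
x_{-j}=\frac{(1-p)\bar x_j+pq_{-j}\alpha_{-j}}{1-p(1-q_{-j})}.
\]
Subtracting $\alpha_{-j}$ and putting everything over the common denominator should give
\[
x_{-j}-\alpha_{-j}=\frac{(1-p)(\bar x_j-\alpha_{-j})}{1-p(1-q_{-j})},
\]
because the numerator is $(1-p)\bar x_j+pq_{-j}\alpha_{-j}-\alpha_{-j}+p(1-q_{-j})\alpha_{-j}=(1-p)\bar x_j-(1-p)\alpha_{-j}$. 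Therefore
\[
|x_{-j}-\alpha_{-j}|=|\bar x_j-\alpha_{-j}|\cdot h(p),\qquad h(p):=\frac{1-p}{1-p(1-q_{-j})}.
\]
The factor $|\bar x_j-\alpha_{-j}|$ does not depend on $p$, since $\bar x_j$ is built from $\hat\alpha_j$ and $\alpha_j$ only.

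It then suffices to show that $h$ is decreasing on $(0,1)$. Differentiating,
\[
h'(p)=\frac{-(1-p(1-q_{-j}))+(1-p)(1-q_{-j})}{(1-p(1-q_{-j}))^2}=\frac{-q_{-j}}{(\cdot)^2}<0,
\]
using $q_{-j}\in(0,1)$. The denominator $1-p(1-q_{-j})$ is positive for all $p\in[0,1]$, so no sign issues arise.

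\emph{Main obstacle.} The one delicate point is the degenerate case $\bar x_j=\alpha_{-j}$, where the distance is identically zero and is only weakly decreasing. I would either note this explicitly or argue that it is excluded by the corner conditions of Corollary~\ref{cor_lextrboth_K2}. Those conditions place $\hat\alpha_j$ and $\alpha_j$ on the same side of $\alpha_{-j}$, so the convex combination $\bar x_j$ lies strictly on that side and $\bar x_j\neq\alpha_{-j}$.

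I also take $\lambda^*_j,\lambda^*_{-j}$ as fixed at their corner values while $p$ varies locally, as the statement implicitly assumes.
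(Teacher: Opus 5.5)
Your proposal is correct and follows the paper's proof essentially step by step: the paper also factors $x^{sbr}_{-j}-\alpha_{-j}=(1-p)(B_j-\alpha_{-j})/(1-p(1-q_{-j}))$ with $B_j=\alpha_jq_j+\hat\alpha_j(1-q_j)$, and concludes from the derivative $-q_{-j}/(1-p(1-q_{-j}))^2<0$ of the $p$-dependent factor. Your explicit treatment of group $j$ (where $x^{sbr}_j=\hat\alpha_j$) and of the degenerate case $\bar x_j=\alpha_{-j}$ fills in points the paper leaves implicit; that case is indeed excluded, because in every corner configuration with $\lambda^*_j=0$ listed in the paper's proof (not only those covered by Corollary~\ref{cor_lextrboth_K2}), $\hat\alpha_j$ lies strictly between $\alpha_j$ and $\alpha_{-j}$.
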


\begin{proof}

$\text{ }$

First of all, the conditions such that $\lambda^*_{j}=0$ and $\lambda^*_{-j}=1$ are as follows:
\begin{enumerate}
  \item $0 < \hat{\alpha}_{-j} < \alpha_j < \hat{\alpha}_j < \alpha_{-j}, \quad 0 < p < \frac{\alpha_{-j} - \hat{\alpha}_j}{\alpha_{-j} - \alpha_j}$;
  \item $0 < \alpha_j < \hat{\alpha}_j < \hat{\alpha}_{-j} < \alpha_{-j}, \quad \frac{\alpha_j - \hat{\alpha}_{-j}}{\alpha_j - \alpha_{-j}} < p < \frac{\alpha_{-j} - \hat{\alpha}_j}{\alpha_{-j} - \alpha_j}$;
  \item $0 < \alpha_j < \hat{\alpha}_{-j} < \hat{\alpha}_j < \alpha_{-j}, \quad \frac{\alpha_j - \hat{\alpha}_{-j}}{\alpha_j - \alpha_{-j}} < p < \frac{\alpha_{-j} - \hat{\alpha}_j}{\alpha_{-j} - \alpha_j}$;
  \item $0 < \alpha_{-j} < \hat{\alpha}_j < \hat{\alpha}_{-j} < \alpha_j,
    \quad
    \frac{\alpha_j - \hat{\alpha}_{-j}}{\alpha_j - \alpha_{-j}} < p < \frac{\alpha_{-j} - \hat{\alpha}_j}{\alpha_{-j} - \alpha_j}$;
  \item $0 < \alpha_{-j} < \hat{\alpha}_{-j} < \hat{\alpha}_j < \alpha_j, \quad \frac{\alpha_j - \hat{\alpha}_{-j}}{\alpha_j - \alpha_{-j}}<p<\frac{\alpha_{-j} - \hat{\alpha}_j}{\alpha_{-j} - \alpha_j}$;
  \item $0 < \alpha_{-j} < \hat{\alpha}_j < \alpha_j < \hat{\alpha}_{-j}, \quad 0<p<\frac{\alpha_{-j} - \hat{\alpha}_j}{\alpha_{-j} - \alpha_j}$.
\end{enumerate}
Under \(\lambda_j^*=0\) and \(\lambda_{-j}^*=1\), we have
\[
x^{sbr}_{-j}
=
\frac{(1-p)(\alpha_j q_j+\hat{\alpha}_j(1-q_j))+\alpha_{-j}p q_{-j}}
{1-p(1-q_{-j})}.
\]
Let
\[
B_j:=\alpha_j q_j+\hat{\alpha}_j(1-q_j).
\]
Then
\[
x^{sbr}_{-j}
=
\frac{(1-p)B_j+\alpha_{-j}p q_{-j}}
{1-p(1-q_{-j})}.
\]
Subtracting \(\alpha_{-j}\), we obtain
\[
x^{sbr}_{-j}-\alpha_{-j}
=
\frac{(1-p)(B_j-\alpha_{-j})}
{1-p(1-q_{-j})}.
\]
Since \(q_{-j}\in(0,1)\), the denominator is strictly positive. Moreover,
\[
\frac{\partial}{\partial p}
\left[
\frac{1-p}{1-p(1-q_{-j})}
\right]
=
-\frac{q_{-j}}{\left(1-p(1-q_{-j})\right)^2}<0.
\]
Therefore,
\[
\frac{\partial}{\partial p}
\left(x^{sbr}_{-j}-\alpha_{-j}\right)
=
-\frac{q_{-j}(B_j-\alpha_{-j})}
{\left(1-p(1-q_{-j})\right)^2}.
\]
Hence the derivative of \(x^{sbr}_{-j}-\alpha_{-j}\) has the opposite sign of
\(x^{sbr}_{-j}-\alpha_{-j}\). It follows that
\[
\left|x^{sbr}_{-j}-\alpha_{-j}\right|
\]
is decreasing in \(p\).
\end{proof}

A straightforward observation that one could make is that while both corner solutions are sub-efficient, blind conformism can improve the performance of agents in case they decide to group with their similar (i.e, increase $p$). This result complements Proposition~\ref{prop_cstat_l*_k}. With interior solutions, greater
assortativity raises the efficient intensity of peer pressure. At the corner
\(\lambda_k^*=1\), the intensity of peer pressure cannot increase further; nevertheless,
greater assortativity can still improve behavior by making the peer statistic closer to the
group's true return.

Corollary~\ref{cor_ljextr_K2} and~\ref{cor_lextrboth_K2} highlight the potential cases when $\lambda^*_j = 1$ for all $j \in \{1,2\}$. In such cases, $x^{sbr}_1$ and $x^{sbr}_2$ are symmetric, thus, we can study only one of them, namely
$$
x^{sbr}_j|_{\lambda^*_j=\lambda^*_{-j}=1} = \frac{\alpha_j q_j (p (2 q_{-j}-1)-q_{-j}+1)-\alpha_{-j} (p-1) q_{-j}}{p q_j (2 q_{-j}-1)-p q_{-j}+q_j (-q_{-j})+q_j+q_{-j}}
$$

We can study, again, the distance between $x^{sbr}_j$ and $\alpha_j$ as a proxy for the precision of agents' decisions. First of all, as for previous cases, we can show that such distance is decreasing in $p$ for both $j \in \{1,2\}$. 

\begin{corollary}\label{cor_|x-a|_lj1l-j1_K2}
Suppose $K=2$, if $\lambda^*_j=1$ for all $j \in \{1,2\}$, then, $|x^{sbr}_j|_{\lambda^*_j=\lambda^*_{-j}=1}-\alpha_j|$ is decreasing in $p$.
\end{corollary}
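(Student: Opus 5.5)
The plan is to mirror the proof of Corollary~\ref{cor_|x-a|_lj1l-j0_K2}: first get an explicit expression for $x^{sbr}_j-\alpha_j$ when $\lambda_1=\lambda_2=1$, and then sign its derivative in $p$. Setting $\lambda_1=\lambda_2=1$ in Equations~\eqref{eq_eff_1}--\eqref{eq_Ex2} gives $x^{sbr}_j=E_j[x]$ for both groups. Writing $x_j:=x^{sbr}_j$, the system becomes
\[
x_j = p\bigl(q_j\alpha_j+(1-q_j)x_j\bigr)+(1-p)\bigl(q_{-j}\alpha_{-j}+(1-q_{-j})x_{-j}\bigr),
\]
together with its mirror image for $x_{-j}$. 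By Lemma~\ref{lem_unique_x_k}, this $2\times 2$ system has a unique solution, and that solution is the displayed closed form for $x^{sbr}_j|_{\lambda^*_j=\lambda^*_{-j}=1}$.

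The key simplification is to rewrite everything in deviations from the true returns. Let $d_j:=x_j-\alpha_j$ and $\delta:=\alpha_{-j}-\alpha_j$. Substituting into the two equations and cancelling the $q$-terms yields a linear system:
\[
q_j d_j \;=\; (1-p)\bigl(q_{-j}d_{-j}-q_j d_j\bigr)+(1-p)\bigl(d_{-j}(1-q_{-j})+\ldots\bigr)+(1-p)\delta .
\]
The intermediate bookkeeping will be redone carefully. What I expect to come out, after subtracting, is
\[
q_jd_j=(1-p)\bigl[\delta+(1-q_{-j})d_{-j}-(1-q_j)d_j\bigr]\quad\text{and}\quad q_{-j}d_{-j}=(1-p)\bigl[-\delta+(1-q_j)d_j-(1-q_{-j})d_{-j}\bigr].
\]
Adding the two equations gives $q_jd_j+q_{-j}d_{-j}=0$, so $d_{-j}=-(q_j/q_{-j})d_j$. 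Substituting this back into the first equation leaves a single linear equation in $d_j$, of the form
\[
d_j\Bigl[q_j+(1-p)\Bigl((1-q_j)+\tfrac{q_j(1-q_{-j})}{q_{-j}}\Bigr)\Bigr]=(1-p)\delta,
\]
so that
\[
d_j=\frac{(1-p)\,q_{-j}\,\delta}{q_jq_{-j}+(1-p)\bigl(q_j+q_{-j}-2q_jq_{-j}\bigr)}.
\]
I will check this against the paper's displayed closed form.

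The monotonicity step is then the same as in Corollary~\ref{cor_|x-a|_lj1l-j0_K2}. The term $|\delta|q_{-j}$ is a positive constant, so $|d_j|$ equals that constant times $h(p):=(1-p)/\bigl(c+(1-p)b\bigr)$, where $c=q_jq_{-j}>0$ and $b=q_j+q_{-j}-2q_jq_{-j}>0$; the sign of $b$ follows from $q_j+q_{-j}-2q_jq_{-j}=q_j(1-q_{-j})+q_{-j}(1-q_j)$. Writing $h(p)=1/\bigl(c/(1-p)+b\bigr)$ shows that $h$ is strictly decreasing in $p$ on $(0,1)$, because the denominator is positive and increasing. The case $\alpha_j=\alpha_{-j}$ is degenerate: there $d_j\equiv 0$, so the claim holds only weakly.

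The main obstacle is the algebra in the second step: reducing the fixed-point system cleanly and making sure the denominator is positive for all $p\in(0,1)$. The sum identity $q_jd_j+q_{-j}d_{-j}=0$ is the trick I plan to rely on; if it fails because of a bookkeeping error, I will instead differentiate the paper's closed form directly. In that case I would check that the numerator of $x_j-\alpha_j$ factors as $(1-p)q_{-j}(\alpha_{-j}-\alpha_j)$.
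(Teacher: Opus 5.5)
Your proposal is correct, but it takes a different route from the paper.

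\textbf{What the paper does.} It lists the parameter regions in which $\lambda^*_1=\lambda^*_2=1$, splits them into eight cases, and asserts that the monotonicity ``can be verified'' in each case.

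\textbf{What you do.} You rewrite the $\lambda_1=\lambda_2=1$ fixed point in deviations $d_j=x^{sbr}_j-\alpha_j$. You then exploit the identity $q_jd_j+q_{-j}d_{-j}=0$. This identity holds because the $2\times 2$ peer matrix is symmetric, hence doubly stochastic, so $E_j[x]+E_{-j}[x]=\bar x_j+\bar x_{-j}$ with $\bar x_j=q_j\alpha_j+(1-q_j)x^{sbr}_j$. This gives
\[
x^{sbr}_j-\alpha_j=\frac{(1-p)\,q_{-j}\,(\alpha_{-j}-\alpha_j)}{q_jq_{-j}+(1-p)\bigl(q_j(1-q_{-j})+q_{-j}(1-q_j)\bigr)}.
\]
This agrees with the paper's displayed closed form: the numerator of $x^{sbr}_j-\alpha_j$ factors exactly as you anticipated.

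\textbf{What each approach buys.} Your route gives a uniform result. $|x^{sbr}_j-\alpha_j|$ is strictly decreasing on all of $p\in(0,1)$ whenever $\alpha_j\neq\alpha_{-j}$, regardless of region. The case split is therefore needed only to know when the hypothesis holds, not for the monotonicity itself. Your route also shows that the double-corner effort is biased toward $\alpha_{-j}$, and it mirrors the factorization used for Corollary~\ref{cor_|x-a|_lj1l-j0_K2}. The paper's enumeration keeps the claim tied to the corner regions but leaves the actual verification to the reader.

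\textbf{Two small points.}
\begin{enumerate}
\item Your first display containing ``$\ldots$'' is garbled. Replace it by substituting $\bar x_j=\alpha_j+(1-q_j)d_j$ directly into $x_j=E_j[x]$; this gives your two ``expected'' equations at once.
\item Your caveat about $\alpha_j=\alpha_{-j}$ is harmless. Every region listed in the paper's proof has $\alpha_j\neq\alpha_{-j}$, so the decrease is strict there.
\end{enumerate}
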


\begin{proof}

$\text{ }$

Similarly to previous proofs, we begin by listing all the conditions such that $\lambda^*_j=1$ for all $j \in \{1,2\}$
\begin{enumerate}
  \item \(\alpha_j < \hat{\alpha}_j \le \alpha_{-j}\).
    \begin{itemize}
      \item \(0 < \hat{\alpha}_{-j} \le \alpha_j\).
      \item \(\alpha_j < \hat{\alpha}_{-j} < \alpha_{-j}, \quad
        \dfrac{\alpha_j - \hat{\alpha}_{-j}}{\alpha_j - \alpha_{-j}} < p < 1.\)
    \end{itemize}
  \item \(\alpha_j < \alpha_{-j} < \hat{\alpha}_j\).
    \begin{itemize}
      \item \(0 < \hat{\alpha}_{-j} \le \alpha_j\).
      \item \(\alpha_j < \hat{\alpha}_{-j} < \alpha_{-j}, \quad
        \dfrac{\alpha_j - \hat{\alpha}_{-j}}{\alpha_j - \alpha_{-j}} < p < 1.\)
    \end{itemize}

  \item \(\hat{\alpha}_j < \alpha_{-j} < \alpha_j\).
    \begin{itemize}
      \item \(0 < \hat{\alpha}_{-j} \le \alpha_j,\quad
        \dfrac{\alpha_j - \hat{\alpha}_{-j}}{\alpha_j - \alpha_{-j}} < p < 1.\)
      \item \(\alpha_j < \hat{\alpha}_{-j}.\)
    \end{itemize}

  \item \(\alpha_{-j} < \hat{\alpha}_j < \alpha_j\).
    \begin{itemize}
      \item \(0 < \hat{\alpha}_{-j} \le \alpha_j,\quad
        \dfrac{\alpha_j - \hat{\alpha}_{-j}}{\alpha_j - \alpha_{-j}} < p < 1.\)
      \item \(\alpha_j < \hat{\alpha}_{-j}.\)
    \end{itemize}
\end{enumerate}
The above conditions can be broken up into 8 conditions, and as for previous statements, it can be verified that under those conditions $|x^{sbr}_j|_{\lambda^*_j=\lambda^*_{-j}=1}-\alpha_j|$ is decreasing in $p$.
\end{proof}

\section{A Generalization to Strictly Concave Payoffs}
\label{secC:general_payoffs}

The analysis in the main text relies on the quadratic specification of material payoffs, under which the subjective best response is linear in the conformity parameter:
\[
x^{sbr}_k=(1-\lambda_k)\hat{\alpha}_k+\lambda_k E_k[x].
\]
This linearity is central in the proof of monotonicity of $x_k^{sbr}(\lambda_k)$ and, in turn, in the characterization of the efficient social pressure level $\lambda_k^*$ in the interior case. Equation \eqref{eq_xopt_k} in the paper delivers the linear best response, while the proof of Lemma \ref{lem_unique_laE_k[x]} exploits the fact that $x_k^{sbr}(\lambda_k)$ is monotone in $\lambda_k$. 

This appendix shows that the linearity in $\lambda_k$ is not essential. It extends the model to a general class of strictly concave single-peaked payoff functions. In that environment, the subjective best response is no longer linear in the primitive parameter $\lambda_k$, but there exists a strictly increasing transformation of $\lambda_k$ into a new effective conformity parameter under which the best response again admits a linear representation. 

Suppose that true material payoffs are given by
\[
\pi_k(x,\alpha_k)=U(\alpha_k-x),
\]
where $U:\mathbb{R}\to\mathbb{R}$ satisfies the following assumptions.

\begin{assumption}\label{ass:U}
The function $U$ is continuously differentiable and strictly concave. Moreover, $U$ has a unique maximizer at $0$.
\end{assumption}

Assumption \ref{ass:U} implies that
\[
U'(z)>0 \quad \text{for } z<0, 
\qquad
U'(0)=0,
\qquad
U'(z)<0 \quad \text{for } z>0.
\]
Thus, $U$ is strictly increasing on $(-\infty,0)$ and strictly decreasing on $(0,\infty)$.

For a misspecified agent in group $k$ with perceived return $\hat{\alpha}_k$, define subjective utility as
\[
u_k(x;\lambda)
=
(1-\lambda)U(\hat{\alpha}_k-x)+\lambda U(E_k[x]-x),
\qquad \lambda\in[0,1].
\]
This is the natural analogue of the subjective utility used in the paper, replacing the quadratic form with a general strictly concave single-peaked function. In this general case, misspecified agents know the functional form of $U$ up to a constant, in the same way as in the quadratic case they do not know $C$.

Define the subjective best response by
\[
x_k^{sbr}(\lambda)\in \arg\max_{x\in \mathbb{R}_+} u_k(x;\lambda).
\]

We now proceed in two steps. We first study the individual optimization problem conditional on a fixed reference term \(E_k[x]\). This yields a representation result for the subjective best response. We then return to equilibrium and show that, at an interior efficient allocation, the implied value of \(E_k[x]\) is consistent with the equilibrium profile.

\begin{proposition}
\label{prop:concave_representation}
Suppose Assumption 1 holds and suppose \(E_k[x]\neq \hat{\alpha}_k\). Fix \(E_k[x]\) in the individual optimization problem. Then, for every \(\lambda\in[0,1]\), there exists a unique
\[
\tau_k(\lambda)\in[0,1]
\]
such that
\[
x_k^{sbr}(\lambda)
=
(1-\tau_k(\lambda))\hat{\alpha}_k+\tau_k(\lambda)E_k[x].
\]
Moreover, \(\tau_k:[0,1]\to[0,1]\) is continuous and strictly increasing, with
\[
\tau_k(0)=0,
\qquad
\tau_k(1)=1.
\]
\end{proposition}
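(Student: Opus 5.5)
Write $a:=\hat\alpha_k$ and $e:=E_k[x]$, and assume without loss of generality that $a<e$ (the case $a>e$ is symmetric). The plan has four steps: locate the maximizer on the segment between $a$ and $e$, show it is unique and characterized by a first-order condition, define $\tau_k$ as the relative position of the maximizer on that segment, and then prove monotonicity and continuity.

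\textbf{Step 1 (location of the maximizer).} The map $x\mapsto u_k(x;\lambda)$ is strictly concave, since $U$ is strictly concave and $\lambda\in[0,1]$, so it has at most one maximizer. For $x<a$ both arguments $a-x$ and $e-x$ are positive, so by Assumption~\ref{ass:U} both terms are strictly decreasing in $x$ there. Hence $u_k$ is strictly increasing in $x$ on $(-\infty,a]$. Symmetrically, for $x>e$ both terms are strictly increasing in their argument as $x$ decreases, so $u_k$ is strictly decreasing in $x$ on $[e,\infty)$. Any maximizer therefore lies in $[a,e]$. A maximizer exists on this compact interval by continuity. Nonnegativity is not binding, because $a,e\ge 0$ in the quadratic setting (and here we simply assume $a,e\in\mathbb{R}_+$).

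\textbf{Step 2 (first-order condition and definition of $\tau_k$).} Differentiating gives
\[
\partial_x u_k=-(1-\lambda)U'(a-x)-\lambda U'(e-x).
\]
On $[a,e]$, the term $-U'(a-x)$ is nonpositive and the term $-U'(e-x)$ is nonnegative. Moreover $\partial_x u_k$ is strictly decreasing in $x$ whenever $\lambda\in(0,1)$, by strict concavity of $U$ (so $U'$ is strictly decreasing).
\begin{itemize}
\item At $\lambda=0$ the maximizer is $a$; at $\lambda=1$ it is $e$.
\item For $\lambda\in(0,1)$, $\partial_x u_k$ is $>0$ at $x=a$ and $<0$ at $x=e$, so the maximizer $x^*(\lambda)$ is the unique interior root of $\partial_x u_k=0$.
\end{itemize}
We then set
\[
\tau_k(\lambda):=\frac{x^*(\lambda)-a}{e-a}\in[0,1].
\]
Uniqueness of $\tau_k(\lambda)$ follows from uniqueness of the maximizer together with $e\neq a$. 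This also gives the endpoint values $\tau_k(0)=0$ and $\tau_k(1)=1$.

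\textbf{Step 3 (strict monotonicity).} This is the step I expect to need the most care, because $U'$ need not be differentiable, so the implicit function theorem is unavailable. Rewrite the first-order condition for $\lambda\in(0,1)$ as
\[
\frac{\lambda}{1-\lambda}=\frac{-U'(a-x)}{U'(e-x)}=:R(x),\qquad x\in(a,e).
\]
Both the numerator and the denominator of $R$ are positive on $(a,e)$. As $x$ increases, $-U'(a-x)$ is strictly increasing (its argument $a-x$ decreases further below $0$, and $U'$ is strictly decreasing), while $U'(e-x)$ is strictly decreasing and positive. Hence $R$ is continuous and strictly increasing on $(a,e)$.

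We also need the range of $R$ to be all of $(0,\infty)$. As $x\downarrow a$, $R(x)\to 0$ because $U'(0)=0$ and $U'(e-a)>0$. As $x\uparrow e$, $R(x)\to\infty$ because the denominator tends to $0^+$ while the numerator tends to $-U'(a-e)>0$.

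Therefore $R$ is a homeomorphism from $(a,e)$ onto $(0,\infty)$, and
\[
x^*(\lambda)=R^{-1}\!\left(\frac{\lambda}{1-\lambda}\right).
\]
Since $\lambda\mapsto \lambda/(1-\lambda)$ is also strictly increasing and continuous, $x^*$ is continuous and strictly increasing on $(0,1)$. The same holds for $\tau_k$, which is an increasing affine transform of $x^*$.

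\textbf{Step 4 (continuity at the endpoints and the case $a>e$).} Continuity at $\lambda=0$ and $\lambda=1$ follows from the limits of $R$ computed in Step 3: $x^*(\lambda)\to a$ as $\lambda\to0$ and $x^*(\lambda)\to e$ as $\lambda\to1$. For $a>e$, the same argument applies with the inequalities mirrored, giving the same conclusion for $\tau_k$.
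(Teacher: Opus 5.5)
\textbf{Assessment.} Your argument is correct, but Step 3 takes a different route from the paper, and some signs are misstated (the errors cancel).

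\textbf{Comparison with the paper.} Your Steps 1--2 follow the paper. The maximizer is trapped between $\hat{\alpha}_k$ and $E_k[x]$ by the sign of $\partial_x u_k$ at the endpoints, it is unique by strict concavity, and $\tau_k$ is its relative position on that segment. Your Step 1 also secures existence by compactness and handles the constraint $x\in\mathbb{R}_+$, which the paper passes over. The genuine difference is Step 3. The paper proves strict monotonicity by a single-crossing argument. It evaluates $\partial_x u_k(\cdot;\lambda'')$ at the old maximizer $x_k^{sbr}(\lambda')$ and notes that raising $\lambda$ shifts weight from the negative term $-U'(\hat{\alpha}_k-x)$ to the positive term $-U'(E_k[x]-x)$. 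That derivative therefore becomes strictly positive, and the new maximizer lies strictly closer to $E_k[x]$. Continuity of $\tau_k$ is then only asserted, as continuity of the unique maximizer in $\lambda$. You instead invert the first-order condition as $\lambda/(1-\lambda)=R(x)$, with $R$ a continuous strictly increasing bijection of $(a,e)$ onto $(0,\infty)$, so that $x^*(\lambda)=R^{-1}\bigl(\lambda/(1-\lambda)\bigr)$. The paper's route is shorter and local; for monotonicity it uses only the sign pattern of $U'$ and strict concavity. Yours additionally uses $U'(0)=0$ and continuity of $U'$ to pin down the range of $R$; both are available under Assumption~\ref{ass:U}. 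In exchange, you get monotonicity and continuity in one stroke, including at $\lambda\in\{0,1\}$ via the boundary limits of $R$, plus an explicit formula for $\tau_k$. So you actually prove the continuity claim that the paper leaves to an unstated maximum-theorem argument.

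\textbf{Sign errors to fix.} The signs in Step 3 are wrong, even though the errors cancel.
On $(a,e)$ you have $a-x<0$, so $U'(a-x)>0$. The numerator $-U'(a-x)$ is therefore negative, consistent with your Step 2, and strictly \emph{decreasing}, not increasing.
Since $e-x>0$, the denominator satisfies $U'(e-x)<0$ and increases to $0$ as $x\uparrow e$; it is not positive.
Likewise $U'(e-a)<0$ and $-U'(a-e)<0$.
Write instead $R(x)=U'(a-x)/\bigl(-U'(e-x)\bigr)$. This is a positive increasing function vanishing at $a$ divided by a positive decreasing function vanishing at $e$, and every assertion in Step 3 then becomes literally true.
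Similarly, in Step 1 the two terms of $u_k$ are \emph{increasing} in $x$ for $x<a$, not decreasing; that is what actually yields your conclusion that $u_k$ is increasing there.
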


\begin{proof}

$\text{ }$

Fix \(E_k[x]\neq \hat{\alpha}_k\). Since \(U\) is strictly concave, \(u_k(\cdot;\lambda)\) is strictly concave for every \(\lambda\in(0,1)\), so the subjective best response \(x_k^{sbr}(\lambda)\) is unique. Moreover,
\[
x_k^{sbr}(0)=\hat{\alpha}_k,
\qquad
x_k^{sbr}(1)=E_k[x].
\]

For \(\lambda\in(0,1)\), the first-order condition is
\[
(1-\lambda)U'(\hat{\alpha}_k-x)+\lambda U'(E_k[x]-x)=0.
\]
If \(E_k[x]>\hat{\alpha}_k\), the left-hand side is strictly positive at \(x=\hat{\alpha}_k\) and strictly negative at \(x=E_k[x]\). Hence the unique maximizer lies in \((\hat{\alpha}_k,E_k[x])\). If \(E_k[x]<\hat{\alpha}_k\), the same argument implies
\[
x_k^{sbr}(\lambda)\in(E_k[x],\hat{\alpha}_k).
\]
Therefore, for every \(\lambda\in[0,1]\), there exists a unique \(\tau_k(\lambda)\in[0,1]\) such that
\[
x_k^{sbr}(\lambda)
=
(1-\tau_k(\lambda))\hat{\alpha}_k+\tau_k(\lambda)E_k[x].
\]


To show that $\tau_k$ is strictly increasing, it suffices to show that
$x_k^{sbr}(\lambda)$ moves strictly from $\hat{\alpha}_k$ toward $E_k[x]$ as $\lambda$ increases.

Fix $\lambda''>\lambda'$. Suppose first that $E_k[x]>\hat{\alpha}_k$. Then
$x_k^{sbr}(\lambda')\in(\hat{\alpha}_k,E_k[x])$, so
\[
U'(\hat{\alpha}_k-x_k^{sbr}(\lambda'))>0
\quad\text{and}\quad
U'(E_k[x]-x_k^{sbr}(\lambda'))<0.
\]
Now recall that
\[
\partial_x u_k(x;\lambda)
=
-(1-\lambda)U'(\hat{\alpha}_k-x)-\lambda U'(E_k[x]-x).
\]
Evaluating $\partial_x u_k(\cdot;\lambda)$ at $x_k^{sbr}(\lambda')$, the first term is
negative and the second term is positive. Hence, increasing $\lambda$ reduces the weight
on the negative term and increases the weight on the positive term. Therefore,
\(
\partial_x u_k(x_k^{sbr}(\lambda');\lambda'')
>
\partial_x u_k(x_k^{sbr}(\lambda');\lambda')
\).

Since $x_k^{sbr}(\lambda')$ is the maximizer at $\lambda'$, the first-order condition gives
\(
\partial_x u_k(x_k^{sbr}(\lambda');\lambda')=0
\).
Therefore,
\(
\partial_x u_k(x_k^{sbr}(\lambda');\lambda'')>0
\).

Since $u_k(\cdot;\lambda'')$ is strictly concave, its derivative is positive to the left of the
unique maximizer and negative to the right. It follows that
\(
x_k^{sbr}(\lambda'')>x_k^{sbr}(\lambda')
\).

If instead $E_k[x]<\hat{\alpha}_k$, the same argument gives
\(
x_k^{sbr}(\lambda'')<x_k^{sbr}(\lambda')
\).
Thus $x_k^{sbr}(\lambda)$ moves strictly from $\hat{\alpha}_k$ toward $E_k[x]$ as $\lambda$
increases, and therefore $\tau_k$ is strictly increasing.

Finally, continuity of $\tau_k$ follows from continuity of the unique maximizer
$x_k^{sbr}(\lambda)$ in $\lambda$, and the endpoint conditions $\tau_k(0)=0$ and
$\tau_k(1)=1$ follow from the identities above.
\end{proof}

Proposition \ref{prop:concave_representation} shows that, although the subjective best response is not generally linear in the primitive parameter \(\lambda\), it is linear in the transformed parameter \(\tau_k(\lambda)\). In the quadratic benchmark, \(\tau_k(\lambda)=\lambda\), so the analysis in the main text is recovered as a special case.

This observation is sufficient to recover the analogue of Corollary \ref{cor_selfconf_gen} in the general strictly concave case. In particular, when the efficient social pressure level is interior, the relevant object is the transformed parameter
\[
\tilde{\lambda}_k^*:=\tau_k(\lambda_k^*).
\]

The previous proposition is an individual representation result, obtained by holding the
reference term \(E_k[x]\) fixed. It is nevertheless sufficient to recover the interior benchmark
of the quadratic model. Indeed, at an interior efficient profile, material optimality requires
\(x_k^{sbr}=\alpha_k\) for every group \(k\). Hence the equilibrium peer statistic is pinned down
by true returns:
\[
E_k[x]=\sum_{j=1}^K p_{kj}\alpha_j.
\]
Thus, once the model is written in terms of the transformed conformity parameter
\(\tilde\lambda_k^*:=\tau_k(\lambda_k^*)\), the same interior characterization as in the quadratic
case is recovered.

\begin{corollary}
\label{cor:concave_interior}
Suppose the conditions of Proposition \ref{prop:concave_representation} hold, and suppose the efficient social pressure level is interior for all misspecified groups. 
Then
\[
\tilde{\lambda}_k^*
=
\frac{\hat{\alpha}_k-\alpha_k}
{\hat{\alpha}_k-\sum_{j=1}^K p_{kj}\alpha_j}.
\]
Consequently:

\begin{enumerate}
    \item \(\lambda_k^*\in(0,1)\) if and only if \(\tilde{\lambda}_k^*\in(0,1)\), which holds if and only if one of the following two conditions is satisfied:
    \[
    \hat{\alpha}_k<\alpha_k<\sum_{j=1}^K p_{kj}\alpha_j,
    \qquad\text{or}\qquad
    \sum_{j=1}^K p_{kj}\alpha_j<\alpha_k<\hat{\alpha}_k.
    \]

    \item The transformed efficient social pressure \(\tilde{\lambda}_k^*\) does not depend on the frequencies \(q_j\).

    \item If \(\tilde{\lambda}_k^*\in(0,1)\), then \(\tilde{\lambda}_k^*\) is increasing in \(\hat{\alpha}_k\) if and only if
    \[
    \alpha_k>\sum_{j=1}^K p_{kj}\alpha_j.
    \]
\item Under the proportional variation of out-group weights used in
Proposition~\ref{prop_cstat_l*_k}, if \(\tilde{\lambda}_j^*\in(0,1)\) for all
misspecified groups \(j\), then \(\tilde{\lambda}_k^*\) is increasing in \(p_{kk}\).
\end{enumerate}
\end{corollary}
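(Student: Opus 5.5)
The plan is to mirror the quadratic argument, using Proposition~\ref{prop:concave_representation} as a substitute for the linear first-order condition \eqref{eq_xopt_k}. Fix an interior efficient profile in which every misspecified group has $\lambda_k^*\in(0,1)$. Material optimality under Assumption~\ref{ass:U} means $U(\alpha_k-x)$ is uniquely maximized at $x=\alpha_k$. Following the logic of Lemma~\ref{lem_unique_laE_k[x]}, an interior maximizer of $\lambda\mapsto U(\alpha_k-x_k^{sbr}(\lambda))$ must therefore satisfy $x_k^{sbr}(\lambda_k^*)=\alpha_k$. This step requires $x_k^{sbr}$ to be continuous and strictly monotone in $\lambda$, which follows from the strict monotonicity of $\tau_k$.

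Since correctly specified agents play $\alpha_j$ (Lemma~\ref{lem_l=0_nonmiss} carries over verbatim, because $\lambda=0$ yields $x=\alpha_k$ and hence the maximum of $U$), every type plays $\alpha_j$. Equation~\eqref{eq_E[x]_k} then gives $E_k[x]=\sum_j p_{kj}\alpha_j$ for every $k$. This is the fixed-point consistency step, and it makes the equilibrium reference term independent of the $q_j$.

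Substituting into the representation $\alpha_k=(1-\tilde\lambda_k^*)\hat\alpha_k+\tilde\lambda_k^*\sum_j p_{kj}\alpha_j$, I solve the linear equation exactly as in \eqref{eq_l*_k} to obtain the stated formula. This requires $\hat\alpha_k\neq\sum_j p_{kj}\alpha_j$, which is the hypothesis $E_k[x]\neq\hat\alpha_k$ of Proposition~\ref{prop:concave_representation}. Item 2 is then immediate, since the formula contains no $q_j$.

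For item 1, $\tau_k$ is a continuous strictly increasing bijection of $[0,1]$ with $\tau_k(0)=0$ and $\tau_k(1)=1$, so it maps $(0,1)$ onto $(0,1)$. Hence $\lambda_k^*\in(0,1)$ if and only if $\tilde\lambda_k^*\in(0,1)$. The sign analysis of $(\hat\alpha_k-\alpha_k)/(\hat\alpha_k-\Sigma_k)$ with $\Sigma_k=\sum_j p_{kj}\alpha_j$ runs as in Corollary~\ref{cor_l*_k_int}. The ratio lies in $(0,1)$ exactly when numerator and denominator share a sign and the numerator is smaller in absolute value, which means $\alpha_k$ lies strictly between $\hat\alpha_k$ and $\Sigma_k$.

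Item 3 is the derivative computed in Proposition~\ref{prop_cstat_l*_k}, part 1. Here $\Sigma_k$ genuinely does not depend on $\hat\alpha_k$ because all groups play their true returns, so the derivative is $(\alpha_k-\Sigma_k)/(\hat\alpha_k-\Sigma_k)^2$, whose sign is that of $\alpha_k-\Sigma_k$. Item 4 reuses the computation of Proposition~\ref{prop_cstat_l*_k}, part 2, with $f(p_{kk})=(1-p_{kk})A$. The sign of the derivative is $(\hat\alpha_k-\alpha_k)(\alpha_k-A)$, and this is positive in both interior cases by the same argument given there.

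The main obstacle is the converse in the first step. One might worry that the transformed parameter $\tau_k$ depends on $E_k[x]$, which itself depends on the profile, so that the equivalence between interior efficiency and $x_k^{sbr}=\alpha_k$ could fail through equilibrium feedback. I would handle this by noting that the claim is only about the interior efficient profile. There the reference term is pinned down as $\sum_j p_{kj}\alpha_j$, so $\tau_k$ is evaluated at a fixed $E_k[x]$ and Proposition~\ref{prop:concave_representation} applies directly. Monotonicity of the equilibrium $x_k^{sbr}$ in $\lambda_k$ away from that profile would need an M-matrix-type argument as in Lemma~\ref{lem_unique_laE_k[x]}. I would assume it, as the paper does, rather than reprove it.
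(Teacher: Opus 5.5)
Your proposal is correct and follows essentially the same route as the paper. Interior efficiency forces $x_k^{sbr}(\lambda_k^*)=\alpha_k$, which pins the peer statistic at $E_k[x]=\sum_j p_{kj}\alpha_j$. You then solve the representation of Proposition~\ref{prop:concave_representation} for $\tilde{\lambda}_k^*$, and items 1--4 follow from the bijectivity of $\tau_k$ and the quadratic sign and derivative computations. If anything you are more careful than the paper. It asserts the first step without noting that it needs monotonicity of the equilibrium (not fixed-$E_k[x]$) map $\lambda_k\mapsto x_k^{sbr}$, and it does not spell out why correctly specified agents play $\alpha_j$ or why $\Sigma_k$ is independent of $\hat{\alpha}_k$ here.
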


\begin{proof}

$\text{ }$

Since \(U(\alpha_k-x)\) is uniquely maximized at \(x=\alpha_k\), any interior efficient social pressure level must satisfy
\[
x_k^{sbr}(\lambda_k^*)=\alpha_k.
\]
Hence, exactly as in Corollary~\ref{cor_selfconf_gen} for the quadratic case, the interior efficiency condition implies that the agent's subjective best response must equal the true benchmark. Combining this with Proposition~\ref{prop:concave_representation} yields
\[
\alpha_k=(1-\tilde{\lambda}_k^*)\hat{\alpha}_k+\tilde{\lambda}_k^* E_k[x].
\]
At the interior efficient profile, behavior coincides with the Nash profile, so \(x_j^{sbr}=\alpha_j\) for every \(j\), and therefore
\[
E_k[x]=\sum_{j=1}^K p_{kj}\alpha_j.
\]
Substituting yields
\[
\tilde{\lambda}_k^*
=
\frac{\hat{\alpha}_k-\alpha_k}
{\hat{\alpha}_k-\sum_{j=1}^K p_{kj}\alpha_j}.
\]
The characterization of interiority follows immediately from the condition
\(\tilde{\lambda}_k^*\in(0,1)\), and since \(\tau_k\) is strictly increasing with
\(\tau_k(0)=0\) and \(\tau_k(1)=1\), this is equivalent to \(\lambda_k^*\in(0,1)\).

The fact that \(\tilde{\lambda}_k^*\) does not depend on the \(q_j\)'s is immediate from the formula above. 
The comparative statics with respect to \(\hat{\alpha}_k\) are exactly the same as in the
quadratic benchmark. The comparative static with respect to \(p_{kk}\) is also the same
under the proportional variation of out-group weights used in
Proposition~\ref{prop_cstat_l*_k}.
\end{proof}

Corollary~\ref{cor:concave_interior} shows that the interiority condition and the main
comparative-statics insights of the quadratic model carry over to the transformed parameter
\(\tilde{\lambda}_k^*\). In particular, the analogue of Proposition~\ref{prop_cstat_l*_k}
follows for \(\tilde{\lambda}_k^*\), since its expression coincides with Equation~\eqref{eq_l*_k},
subject to the same qualifications on interiority and peer-composition changes. Thus, beyond
the quadratic benchmark, what is lost is the explicit representation of the efficient social
pressure level in terms of the primitive parameter \(\lambda_k^*\), nor the structure of the
interiority condition nor the associated comparative statics expressed in terms of
\(\tilde{\lambda}_k^*\). By contrast, the closed-form expression for the perceived value of
monitoring in Proposition~\ref{prop:info_sce} is specific to the quadratic specification.

\end{appendices}
\end{document}